\def\eqref#1{equation~\ref{#1}}
\def\1{\bm{1}}
\def\vb{{\bm{b}}}
\def\vm{{\bm{m}}}
\def\vq{{\bm{q}}}
\def\vy{{\bm{y}}}
\def\mW{{\bm{W}}}
\DeclareMathAlphabet{\mathsfit}{\encodingdefault}{\sfdefault}{m}{sl}
\SetMathAlphabet{\mathsfit}{bold}{\encodingdefault}{\sfdefault}{bx}{n}
\newcommand{\tens}[1]{\bm{\mathsfit{#1}}}
\def\tX{{\tens{X}}}
\def\tY{{\tens{Y}}}
\def\sA{{\mathbb{A}}}
\def\sC{{\mathbb{C}}}
\def\sF{{\mathbb{F}}}
\newtheorem{claim}{Claim}
\newenvironment{tight_itemize}{
\begin{itemize}[leftmargin=10pt]
  \setlength{\topsep}{0pt}
  \setlength{\itemsep}{0pt}
  \setlength{\parskip}{0pt}
  \setlength{\parsep}{0pt}
}{\end{itemize}}
\title{EVC: Towards Real-Time Neural Image Compression with Mask Decay}
\author{Guo-Hua Wang$^{1}$\thanks{This work was done when Guo-Hua Wang was a full time intern at Microsoft Research Asia.}~~, Jiahao Li$^{2}$, Bin Li$^{2}$, Yan Lu$^{2}$ \\
$^{1}$State Key Laboratory for Novel Software Technology, Nanjing University
\\
$^{2}$Microsoft Research Asia \\
\texttt{wangguohua@lamda.nju.edu.cn}, \texttt{\{li.jiahao,libin,yanlu\}@microsoft.com} \\
}
\begin{document}

\maketitle

\begin{abstract}
Neural image compression has surpassed state-of-the-art traditional codecs (H.266/VVC) for rate-distortion (RD) performance, but suffers from large complexity and separate models for different rate-distortion trade-offs. In this paper, we propose an Efficient single-model Variable-bit-rate Codec (EVC), which is able to run at 30 FPS with 768x512 input images and still outperforms VVC for the RD performance. By further reducing both encoder and decoder complexities, our small model even achieves 30 FPS with 1920x1080 input images. To bridge the performance gap between our different capacities models, we meticulously design the mask decay, which transforms the large model's parameters into the small model automatically. And a novel sparsity regularization loss is proposed to mitigate shortcomings of $L_p$ regularization. Our algorithm significantly narrows the performance gap by 50\% and 30\% for our medium and small models, respectively. At last, we advocate the scalable encoder for neural image compression. The encoding complexity is dynamic to meet different latency requirements. We propose decaying the large encoder multiple times to reduce the residual representation progressively. Both mask decay and residual representation learning greatly improve the RD performance of our scalable encoder. Our code is at \url{https://github.com/microsoft/DCVC}.
\end{abstract}

\section{Introduction}

The image compression based on deep learning has achieved extraordinary rate-distortion (RD) performance compared to traditional codecs (H.266/VVC)~\citep{VVC}. However, two main issues limit its practicability in real-world applications. One is the large complexity. Most state-of-the-art (SOTA) neural image codecs rely on complex models, such as the large capacity backbones~\citep{SwinT,STF}, the sophisticated probability model~\citep{cheng2020learned}, and the parallelization-unfriendly auto-regressive model~\citep{AR}. The large complexity easily results in unsatisfied latency in real-world applications. The second issue is the inefficient rate-control. Multiple models need to be trained and stored for different rate-distortion trade-offs. For the inference, it requires loading specific model according to the target quality. For practical purposes, a single model is desired to handle variable RD trade-offs.

In this paper, we try to solve both issues to make a single real-time neural image compression model, while its RD performance is still on-par with that of other SOTA models. Inspired by recent progress, we design an efficient framework that is equipped with Depth-Conv blocks~\citep{ConvNet2020} and the spatial prior~\citep{checkerboard, li2022hybrid, entroformer}. For variable RD trade-offs, we introduce an adjustable quantization step~\citep{VariableBitrate,li2022hybrid} for the representations. All modules within our framework are highly efficient and GPU friendly, different from recent Transformer based models~\citep{STF}. Encoding and decoding (including the arithmetic coding) achieves 30 FPS for the $768\times 512$ inputs. Compared with other SOTA models, ours enjoys comparable RD performance, low latency, and a single model for all RD trade-offs.

To further accelerate our model, we reduce the complexity of both the encoder and decoder. Three series models are proposed: Large, Medium, and Small (cf. Appendix Tab.~\ref{tab:scheme-1080p}). Note that our small model even achieves 30 FPS for $1920\times 1080$ inputs. Recent work~\citep{SwinT} points out that simply reducing the model's capacity results in significant performance loss. However, this crucial problem has not been solved effectively. In this paper, we advocate training small neural compression models with a teacher to mitigate this problem. In particular, \emph{mask decay} is proposed to transform the large model's parameters into the small model automatically. Specifically, mask layers are first inserted into a pretrained teacher model. Then we sparsify these masks until all layers become the student's structure. Finally, masks are merged into the raw layers meanwhile keeping the network functionality unchanged. For the sparsity regularization, $L_p$-norm based losses are adopted by most previous works~\citep{l1pruning,ding2018auto,zhang2021aligned}. But they hardly work for neural image compression (cf. Fig.~\ref{fig:Loss}). In this paper, we propose a novel sparsity regularization loss to alleviate the drawbacks of $L_p$ so that optimize our masks more effectively. With the help of our large model, our medium and small models are improved significantly by 50\% and 30\%, respectively (cf. Fig.~\ref{fig:md_single}). That demonstrates the effectiveness of our mask decay and reusing a large model's parameters is helpful for training a small neural image compression model.

\begin{figure}
   \vspace{-4mm}
   \centering
	\subcaptionbox{\label{fig:md_single}}{\includegraphics[width=0.43\linewidth]{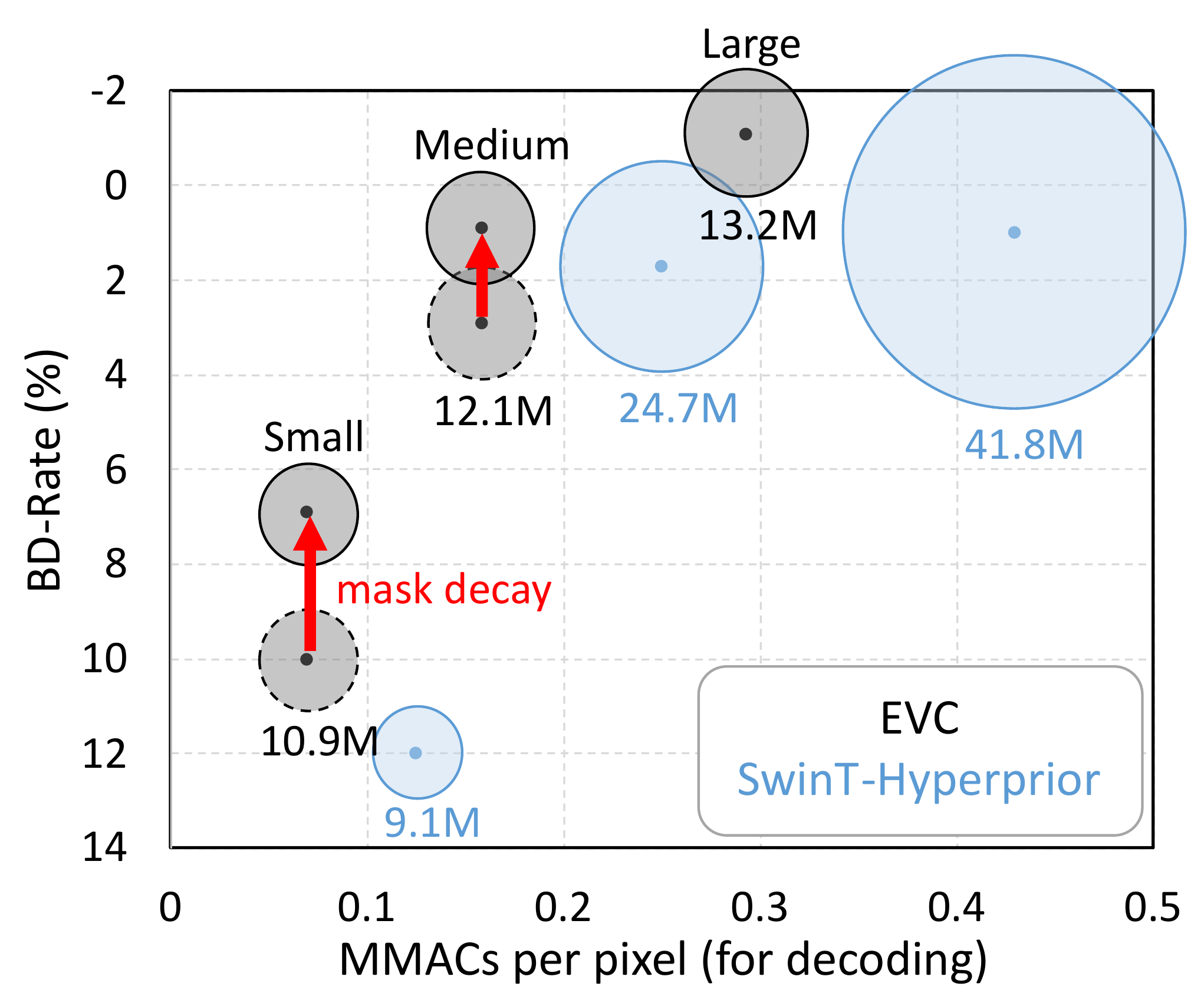}}
   \quad
	\subcaptionbox{\label{fig:scalable}}{\includegraphics[width=0.43\linewidth]{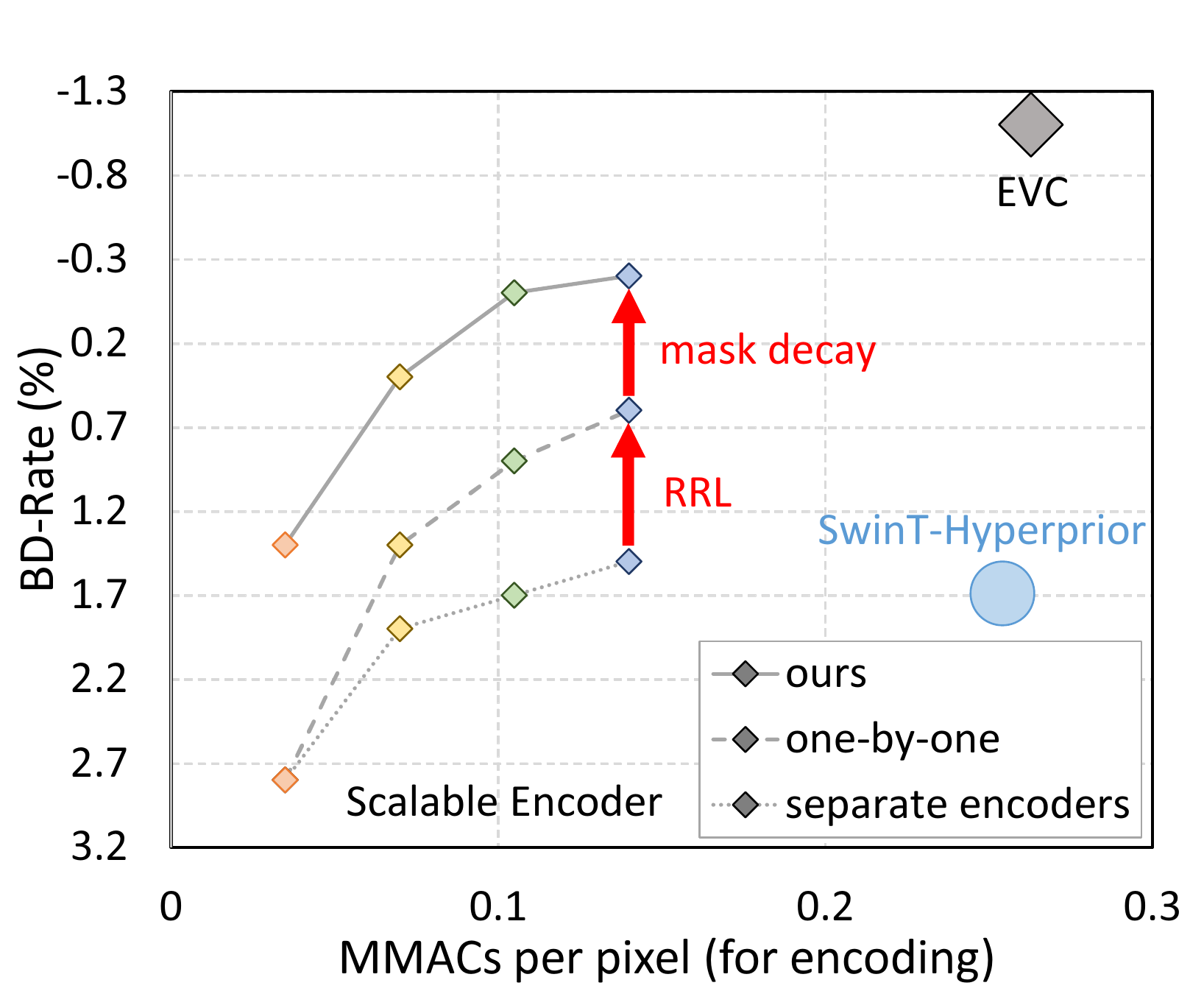}}
   \vspace{-2mm}
\caption{The trade-off between BD-Rate and complexities on Kodak. The anchor is VTM. (\protect\subref{fig:md_single}) and (\protect\subref{fig:scalable}) show the performance improvement by our mask decay and residual representation learning (RRL). We cite results from SwinT-Hyperprior~\citep{SwinT} for comparison. }
\vspace{-4mm}
\label{fig:md_exp} 
\end{figure}

In addition, considering the various device capabilities in real-world codec applications,  the encoder scalability of supporting different encoding complexities while with  only one decoder is also a critical need.
To achieve this, we propose compressing the cumbersome encoder multi-times to progressively bridge the performance gap. Both the residual representation learning (RRL) and mask decay treat the cumbersome encoder as a reference implicitly, and encourage the diversity of different small encoders. Therefore ours achieves superior performance than training separate encoders (cf. Fig.~\ref{fig:scalable}). And compared to SlimCAE~\citep{SlimCAE} which is inspired by slimmable networks~\citep{slimmable}, ours enjoys a simple framework and better RD performance.

Our contributions are as follows.

\begin{tight_itemize}
   \vspace{-2mm}
   \item We propose an \underline{E}fficient \underline{V}ariable-bit-rate \underline{C}odec (EVC) for image compression. It enjoys only one model for different RD trade-offs. Our model is able to run at 30 FPS for the $768\times 512$ inputs, while is on-par with other SOTA models for the RD performance. Our small model even achieves 30 FPS for the $1920\times 1080$ inputs.
   \item We propose mask decay, an effective method to improve the student image compression model with the help of the teacher. A novel sparsity regularization loss is also introduced, which alleviates the shortcomings of $L_p$ regularization. Thanks to mask decay, our medium and small models are significantly improved  by 50\% and 30\%, respectively.
   \item We enable the encoding scalability for neural image compression. With residual representation learning and mask decay, our scalable encoder significantly narrows the performance gap from the teacher and achieves a superior RD performance than previous SlimCAE.
   \vspace{-2mm}
\end{tight_itemize}

\section{Related works}

\textbf{Neural image compression} is in a scene of prosperity. \citet{balle2017end} proposes replacing the quantizer with an additive i.i.d. uniform noise, so that the neural image compression model enjoys the end-to-end training. Then, the hyperprior structure was proposed by \citet{hyperprior}. After that, numerous methods are proposed to improve the entropy model and the backbone (both encoder and decoder). Auto-regressive component~\citep{AR} and Gaussian Mixture Model~\citep{cheng2020learned} are proposed to improve the probability estimation in entropy model. Informer~\citep{informer} proposes using an attention mechanism to exploit both global and local information. Recently, more powerful backbones (e.g., INNs~\citep{InvCompress}, Transformer~\citep{entroformer} and Swin~\citep{SwinT}) are introduced to replace ResNet~\citep{resnet} in the encoder and decoder. However, recent SOTA neural image codec models suffer from high latency and separate models for each RD trade-off. In this paper, we propose an efficient neural image compression framework that aims at encoding and decoding in real-time. Especially, our framework can also handle variable rate-distortion trade-offs by the adjustable quantization step~\citep{VariableBitrate,AG_VAE,li2022hybrid}. The RD performance of our models is on-par with other SOTA neural image compression models, and surpasses the SOTA traditional image codec (H.266/VVC).

\textbf{Model compression and knowledge distillation (KD)} are two prevalent techniques for accelerating neural networks. Model compression, especially filter level pruning~\citep{he2017channel}, aims at obtaining an efficient network structure from a cumbersome network. Some works~\citep{l1pruning,CURL,he2019filter,molchanov2019importance} calculate the importance of each filter, and prune the unimportant filters. Other methods directly penalize filters during training. They treat some parameters as the importance implicitly~\citep{slim,ResRep,Practise}. $L_p$-norm based sparsity losses are used in these methods~\citep{ding2018auto,wen2016learning,zhang2021aligned}. For a pruned network, KD~\citep{KD} is utilized to boost the performance by mimicking the teacher's knowledge (e.g., the logit~\citep{DeKD}, or the feature~\citep{LSHFM}). All these techniques are mainly studied on semantic tasks, but rarely advocated for neural image compression. This paper firstly tries to improve a small neural image compression model with the help of a large model. And we claim that, for this task, reusing the teacher's filters is an effective way to improve the student. To achieve this, we propose mask decay with a novel sparsity regularization loss. With all proposed techniques, our student models are greatly improved. 

\section{Methodology}

We first introduce our efficient framework briefly. Then, three models with different complexities are proposed: Large, Medium, and Small. Next, we describe how to improve our small model (student) with the help of our large model (teacher). Mask decay is proposed to transform the teacher's parameters into the student automatically. At last, the scalable encoder is advocated. With residual representation learning and mask decay, several encoders are trained progressively to bridge the performance gap from the teacher.

\subsection{Our EVC for image compression}

\begin{figure}
   \vspace{-4mm}
   \begin{center}
   \includegraphics[width=0.8\linewidth]{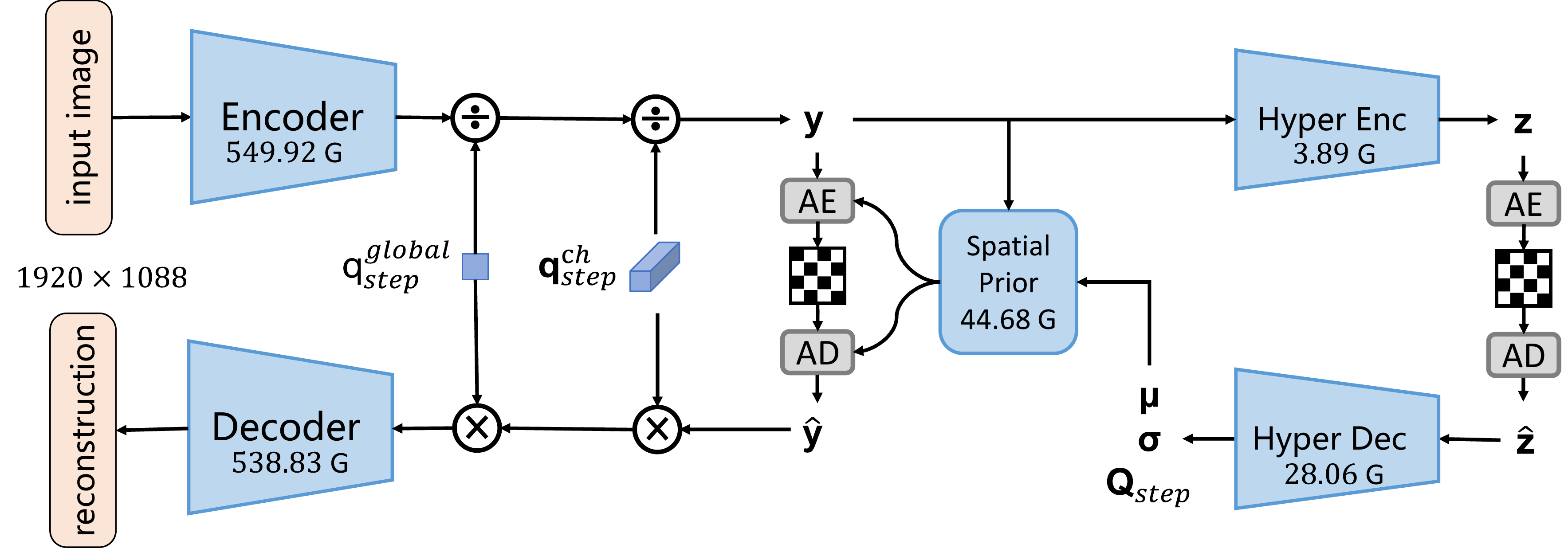}
   \end{center}
   \vspace{-2mm}
   \caption{The overall framework of EVC. }
   \label{fig:framework}
   \vspace{-4mm}
\end{figure}

The framework of EVC is shown in Fig.~\ref{fig:framework}. Given an image, the encoder generates its representation $\vy$. Following \citet{VariableBitrate,li2022hybrid}, to achieve variable bit-rates within a single model, we introduce adjustable quantization steps. $q_{\text{step}}^{\text{global}}$ and $\vq_{\text{step}}^{\text{ch}}$ are two learnable parameters that quantifying $\vy$ globally and channel-wisely, respectively. And different RD trade-offs can be achieved by adjusting the magnitude of $q_{\text{step}}^{\text{global}}$. More details are introduced in Appendix~Sec.~\ref{app-sec:model}. 

Fig.~\ref{fig:framework} also presents MACs of each module with the $1920\times 1088$ input image. Obviously, most computational costs come from the encoder and the decoder. Hence, we focus on accelerating these two modules, while keeping other parts' architectures unchanged. Fig.~\ref{fig:arch-enc-dec} shows the detail structure of these two modules. Different from recent SOTA models~\citep{SwinT,entroformer,STF}, we adopt residual blocks and depth-wise convolution blocks~\citep{ConvNet2020} rather than Transformer. Hence, ours is more GPU-friendly and enjoys lower latency. $C_1$, $C_2$, $C_3$, and $C_4$ are pre-defined numbers for different computational budgets. Three models are proposed with channels $[64,64,128,192]$ (Small), $[128,128,192,192]$ (Medium), and $[192,192,192,192]$ (Large). Tab.~\ref{tab:scheme-1080p} in the appendix summarizes models' \#Params and MACs. Note that Large is our cumbersome teacher model while Medium and Small are efficient student models. 

\begin{figure}
   \vspace{-4mm}
   \begin{center}
   \includegraphics[width=1\linewidth]{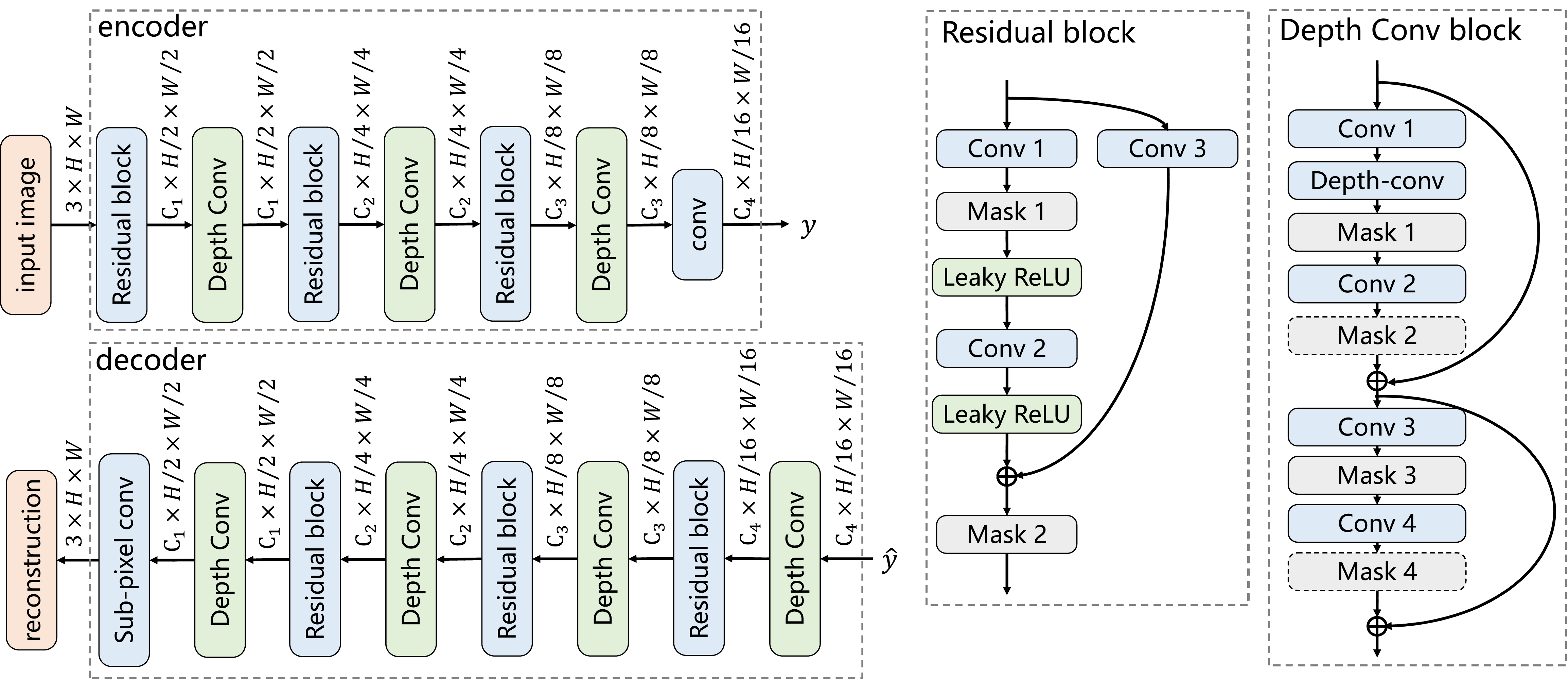}
   \end{center}
   \vspace{-2mm}
   \caption{The architectures of our encoder and decoder. All mask layers will be merged into conv. layers after training. For simplification, we omit Leaky ReLUs within the Depth-Conv block. }
   \label{fig:arch-enc-dec}
   \vspace{-4mm}
\end{figure}

\begin{figure}
   \centering
	\subcaptionbox{\label{fig:mask-arch}}{\includegraphics[width=0.5\linewidth]{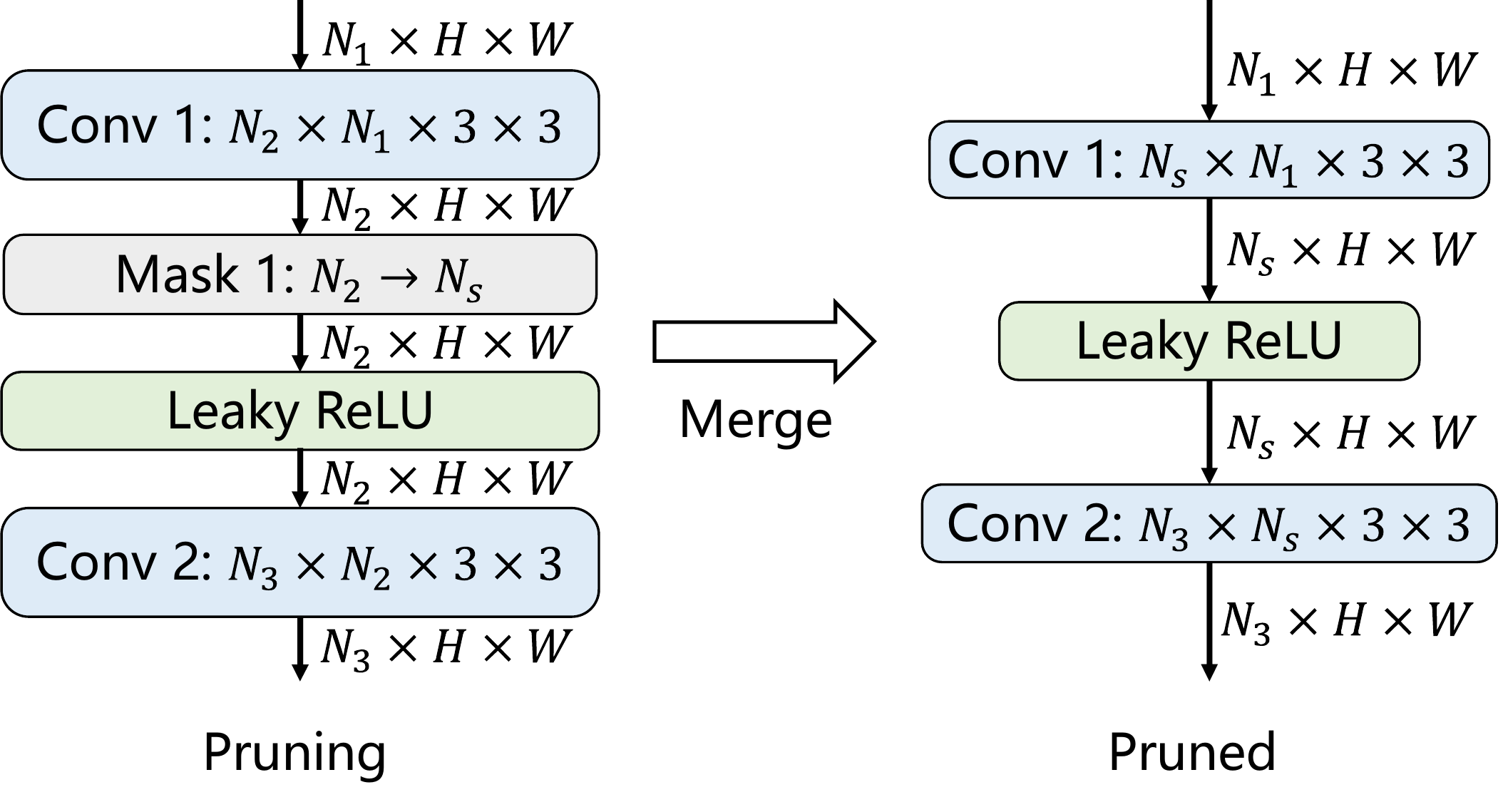}}
   \quad\quad
	\subcaptionbox{\label{fig:mask-loss}}{\includegraphics[width=0.28\linewidth]{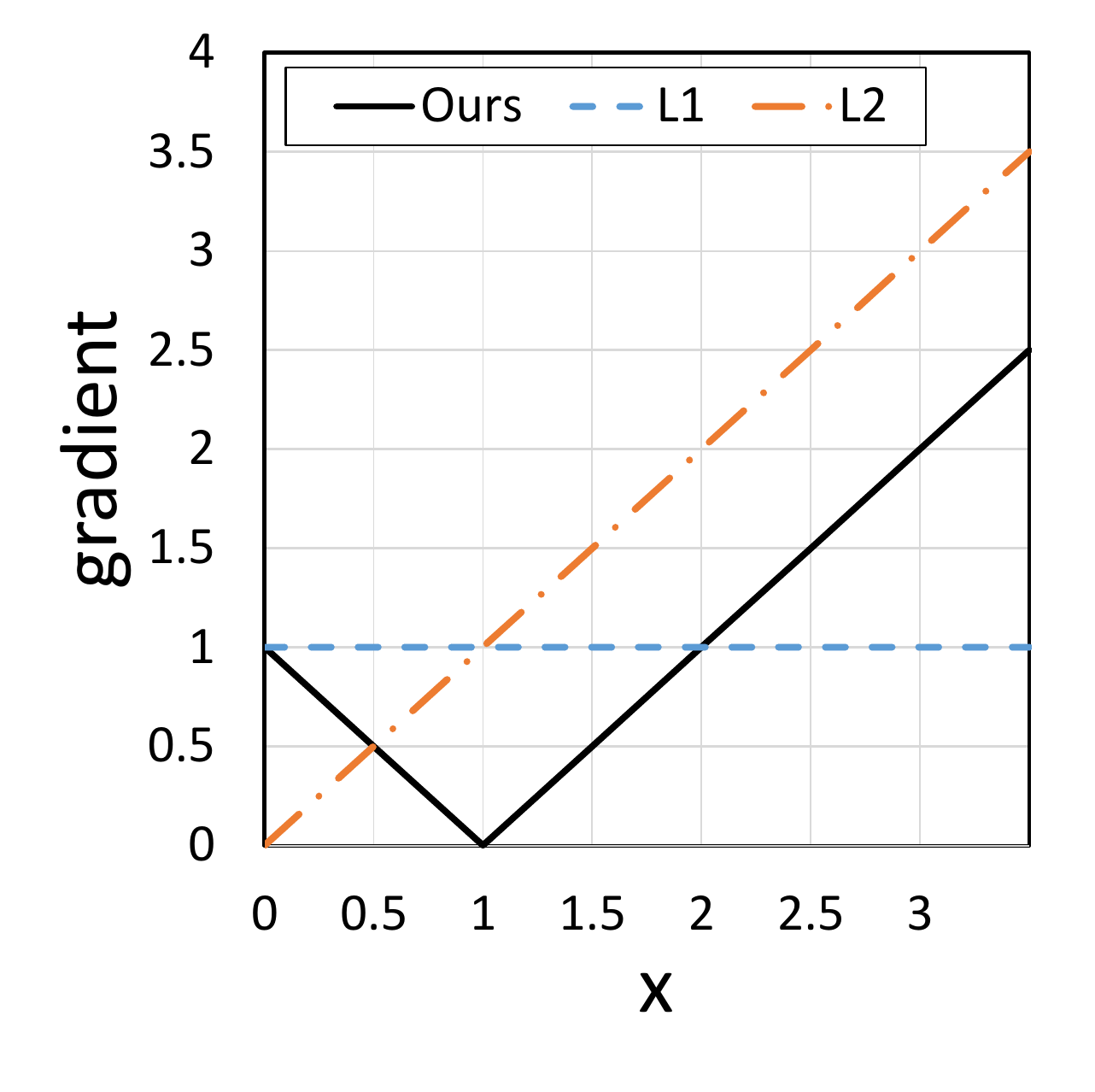}}
   \vspace{-2mm}
\caption{(\protect\subref{fig:mask-arch}) presents our pruning method. The mask layer is inserted between two conv. layers. These two cumbersome conv. layers will be transformed into efficient ones by merging the mask layer into them. (\protect\subref{fig:mask-loss}) compares the gradient $\frac{\partial \mathcal{L}_{sparse}}{\partial x}$ of different sparsity loss functions.}
\vspace{-4mm}
\label{fig:mask} 
\end{figure}

\subsection{Improve the student by mask decay}

Due to the limited capacity, the student performs badly if training from scratch~\citep{SwinT}. How to improve the student is a crucial but rarely studied problem for neural image compression. In this paper, we claim that reusing the teacher's parameters is helpful. To achieve it, we propose mask decay to transform parameters in an end-to-end manner.

\textbf{Compress one layer's representation by mask decay.} As showed in Fig.~\ref{fig:mask-arch}, a mask layer is inserted between two layers to reduce the representation's channels (from $N_2$ to $N_s$). This mask layer contains $N_2$ parameters to prune corresponding channels. In the forward computation, the representation multiplies these parameters channel-wisely. During training, these parameters will become sparse by optimizing our sparsity loss. After training, indicated by the mask, useless parameters in Conv \#1 and \#2 are pruned. And the mask is eliminated by multiplying itself into Conv \#2.

To sparsify parameters $\vm$ in the mask layer, we design a novel sparsity regularization loss. Most previous works~\citep{l1pruning,wen2016learning,zhang2021aligned} adopt L1-norm ($|\vm|$) or L2-norm ($\frac{1}{2}\|\vm\|_2$) based functions. Note that with the gradient descend, $\vm$ is updated by $sign(\vm)$ and $\vm$, respectively. Miserably, both of them fail for neural image compression models (cf. Fig.~\ref{fig:Loss}). We argue that, for a parameter with a positive value, the gradient derived by the L1-norm is a constant without considering its own magnitude. And for the L2-norm, when the parameter is approaching zero, the gradient is too small to sparsify the mask. In addition, both of them have no stationary points when parameters are positive. To alleviate these problems, We design a novel sparsity regularization loss from the gradient view. As showed in Fig.~\ref{fig:mask-loss}, the gradient is defined by
\begin{equation}
   \label{eq:loss-grad}
   \frac{\partial\mathcal{L}_{sparse}(x)}{\partial x} = |x - 1|\,,
\end{equation}
where $x\ge 0$. For $x > 1$, the parameter with a large magnitude suffers a large gradient to decay, which likes the L2-norm. And for $0\le x < 1$, a smaller parameter enjoys a larger gradient to approach zero fast. In addition, there is a stationary point at $x=1$, which means $\frac{\partial\mathcal{L}_{sparse}(x)}{\partial x}|_{x=1} = 0$. When it comes to L1 and L2, there always is a positive gradient for $x > 0$. To make the model converge to stable, it needs a negative gradient from the task loss to balance this positive gradient. Thanks to our sparsity loss, only a small negative gradient is needed when $x$ is close to $1$. $\mathcal{L}_{sparse}$ can be calculated by integrating Equation~\ref{eq:loss-grad}, which is 
\begin{equation}
   \label{eq:loss}
   \mathcal{L}_{sparse}(x)=
   \begin{cases}
      -\frac{1}{2}x^2 + x, &\text{if $0\leq x\leq 1$}\,,\\
      \frac{1}{2}x^2 - x + 1, &\text{if $x > 1$}\,.
   \end{cases}
\end{equation}
For training, inspired by AdamW~\citep{weight_decay}, we decouple $\mathcal{L}_{sparse}$ from the task loss $\mathcal{L}_{RD}$ as weight decay. Therefore, the task loss is not affected by our sparsity loss explicitly. So we name our algorithm as \emph{mask decay}. For the mask $\vm$, the updating formula is 

\begin{equation}
   \label{eq:update}
   \vm_{t+1} 
   = \vm_{t} - \eta\frac{\partial\mathcal{L}_{sparse}}{\partial \vm_t} - \gamma\frac{\partial\mathcal{L}_{RD}}{\partial \vm_t} 
   = \vm_{t} - \eta|\vm_t - 1| - \gamma\frac{\partial\mathcal{L}_{RD}}{\partial \vm_t}\,,
\end{equation}
where $\gamma$ is the learning rate and $\eta$ is the rate for mask decay.

\textbf{Mask decay for our encoder and decoder.} As illustrated in Fig.~\ref{fig:arch-enc-dec}, to transform the teacher into the student, we insert mask layers in some positions if necessary. Our principle is to insert mask layers as few as possible. In one residual block, Mask \#1 and \#2 are inserted to prune channels inside and outside the residual connection, respectively. In the depth-conv block, Mask \#1 determines channels of the depth convolution layer, while Mask \#3 prunes representations between Conv \#3 and \#4. Note that the output channels of these two blocks are shared, so Mask \#2 and \#4 in the depth-conv block share weights with Mask \#2 in the residual block. And we place Mask \#2 and \#4 behind the residual adding to guarantee the identity gradient back-propagation.

\textbf{Training with mask decay. } The training begins with a pretrained teacher model. Then, we insert mask layers and initialize them with $\mathbf{1}$. The whole network is optimized by minimizing $\mathcal{L}_{RD}=R + \lambda D$, where $\lambda$ steers the trade-off between the rate $R$ and the distortion $D$. Our masks are decayed before each iteration, likely weight decay~\citep{weight_decay}. During training, we stop mask decay for one mask if it is sparse enough. When all masks meet the sparsity requirements, the whole sparse network can transform into the student by merging masks, meanwhile keeping the network functionality unchanged. At last, the student will be finetuned for a few epochs.

\subsection{The scalable encoder}

\begin{figure}
   \vspace{-6mm}
   \begin{center}
   \includegraphics[width=0.8\linewidth]{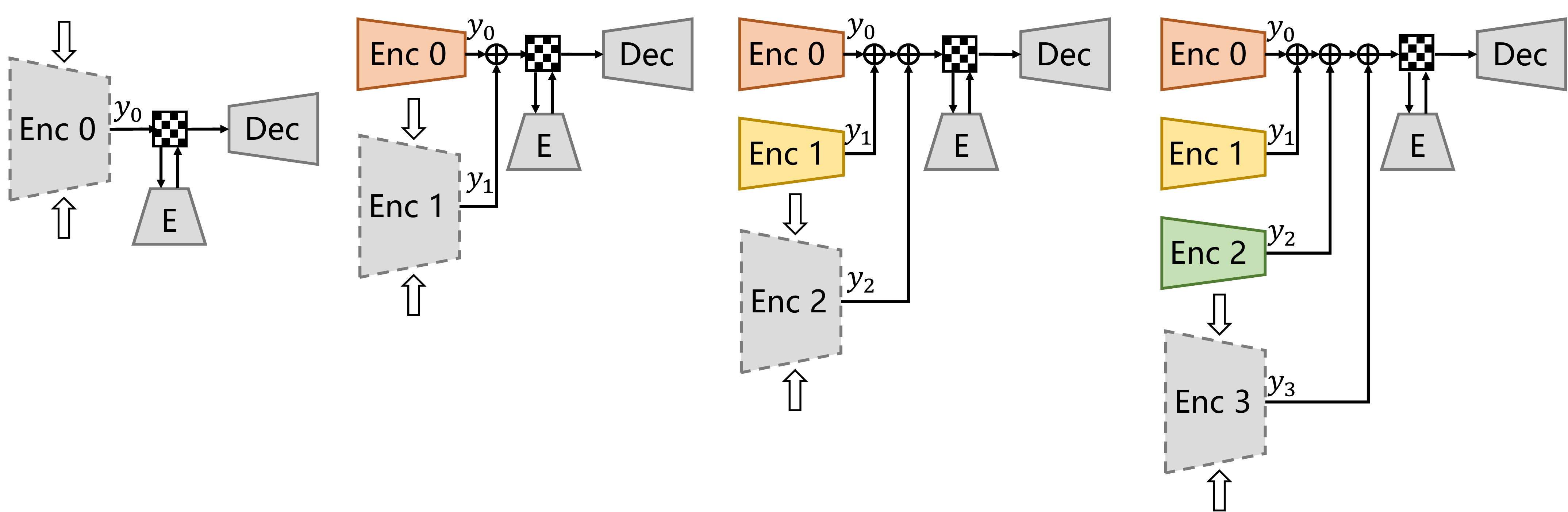}
   \end{center}
   \vspace{-2mm}
   \caption{Illustration of compressing encoders multi-times to learn residual representations progressively. ``Enc'', ``Dec'', and ``E'' denote the encoder, the decoder, and the entropy module, respectively. }
   \label{fig:resp}
   \vspace{-4mm}
\end{figure}

At last, we advocate the scalable encoder in neural image codecs. A trivial solution for the scalable encoder is directly training separate small encoders with a uniform decoder. For a specific encoding complexity, the corresponding number of encoders are chosen to encode together, and then the best bitstreams will be sent to the decoder. However, these separate encoders suffer from the homogenization problem. We propose residual representation learning (RRL) to mitigate this problem. As illustrated in Fig.~\ref{fig:resp}, we compress the cumbersome encoder several times to learn the residual representations progressively. In each step, only parameters in the compressing encoder are optimized. That makes sure previous encoders generate decodable bitstreams. For a specific encoding complexity, the first $k$ encoders are applied. Our RRL encourages the encoders' diversity. Both RRL and mask decay treat the cumbersome teacher as a reference, which makes the training more effective. SlimCAE~\citep{SlimCAE} uses slimmable layers~\citep{slimmable2} to obtain an image compression model with dynamic complexities. Our framework is more simple to implement and achieves superior RD performance. CBANet~\citep{CBANet} proposes a multi-branch structure for dynamic complexity. It can be considered as our baseline. More details are in Appendix~\ref{sec:appendix:scalable}.
\vspace{-1mm}
\section{Experiments}
\vspace{-1mm}
\textbf{Training \& Testing.} The training dataset is the training part of Vimeo-90K~\citep{Vimeo} septuplet dataset. For a fair comparison, all models are trained with 200 epochs. For our method, it costs 60 epochs for the teacher to transform into the student with mask decay, then the student is finetuned by 140 epochs. Testing datasets contain Kodak~\citep{Kodak}, HEVC test sequences~\citep{HEVC}, Tecnick~\citep{Tecnick}. BD-Rate~\citep{BDRate} for peak signal-to-noise ratio (PSNR) versus bits-per-pixel (BPP) is our main metric to compare different models. More details about experimental settings are introduced in Appendix~Sec.~\ref{app:sec:exp}.
\vspace{-1mm}
\subsection{Comparison with state-of-the-art}
\vspace{-1mm}
Fig.~\ref{fig:sota} compares our models with other SOTA models on Tecnick and Kodak. Our large model (``EVC (ours)'') outperforms the SOTA traditional image codec (``VTM 17.2'') and is on-par with other SOTA neural image codec models. Note that EVC handles different RD trade-offs by only \emph{one} model, while other SOTA methods train separate models for each trade-off. Tab.~\ref{tab:speed} compares latency of different models. For Entroformer~\citep{entroformer} and STF~\citep{STF}, we measure the latency by running their official code. Note that the latency \emph{contains arithmetic coding}, but ignores the disk I/O time. SwinT~\citep{SwinT} is also compared in Appendix Tab.~\ref{tab:latency-swint}. Our models surpass these SOTA models with a significant margin. Our large model runs at 30 FPS on A100 for the $768\times 512$ inputs, while our small model even achieves 30 FPS for the 1080P inputs. For more previous but high-impact neural image codecs (e.g., Cheng2020~\citep{cheng2020learned} and Xie2021~\citep{InvCompress}), we make comparisons in Appendix~\ref{sec:more-exp}.

\begin{figure}
   \vspace{-6mm}
	\subcaptionbox{\label{fig:Tecnick}}{\includegraphics[width=0.49\linewidth]{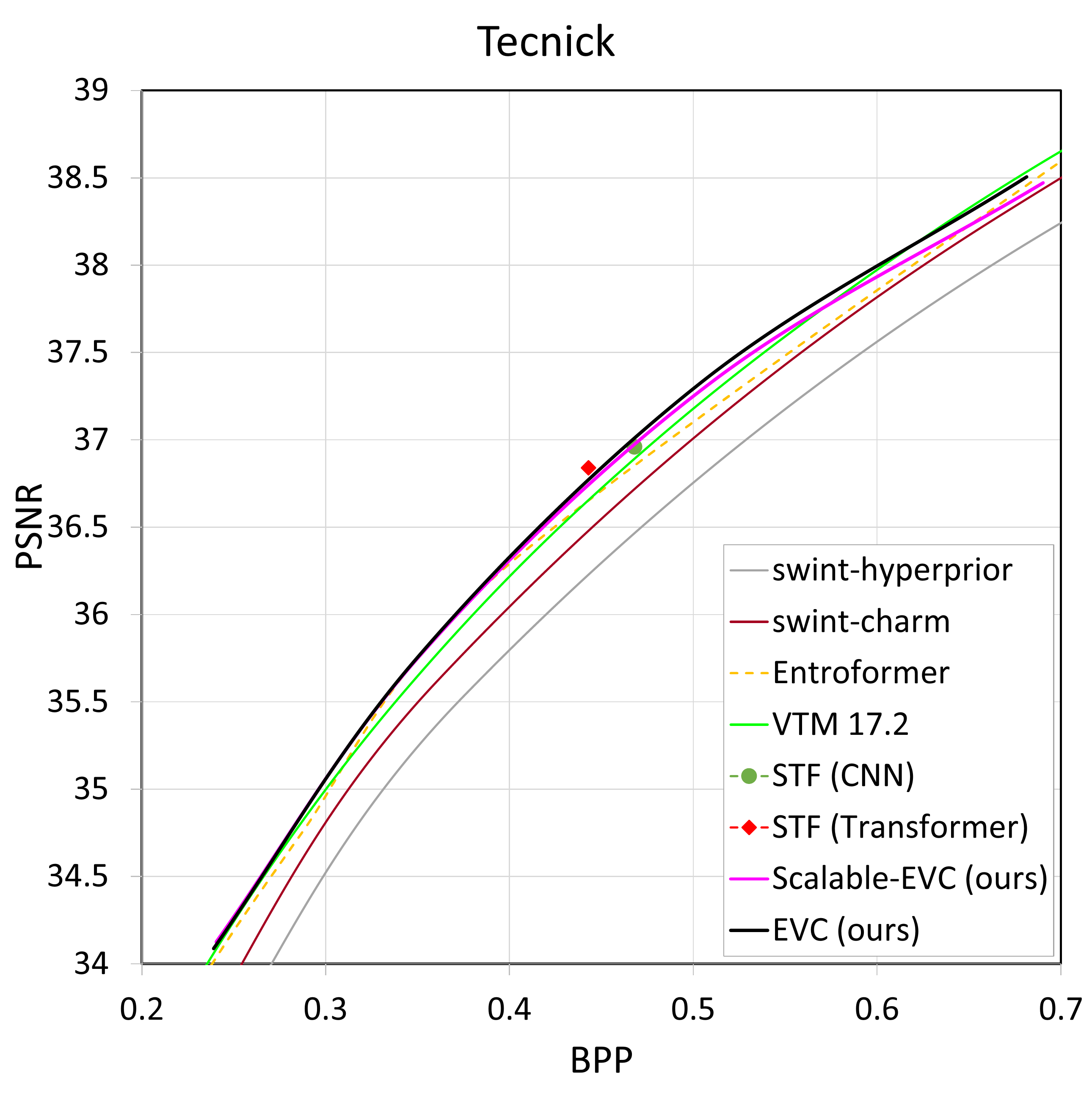}}
   \,
	\subcaptionbox{\label{fig:Kodak}}{\includegraphics[width=0.49\linewidth]{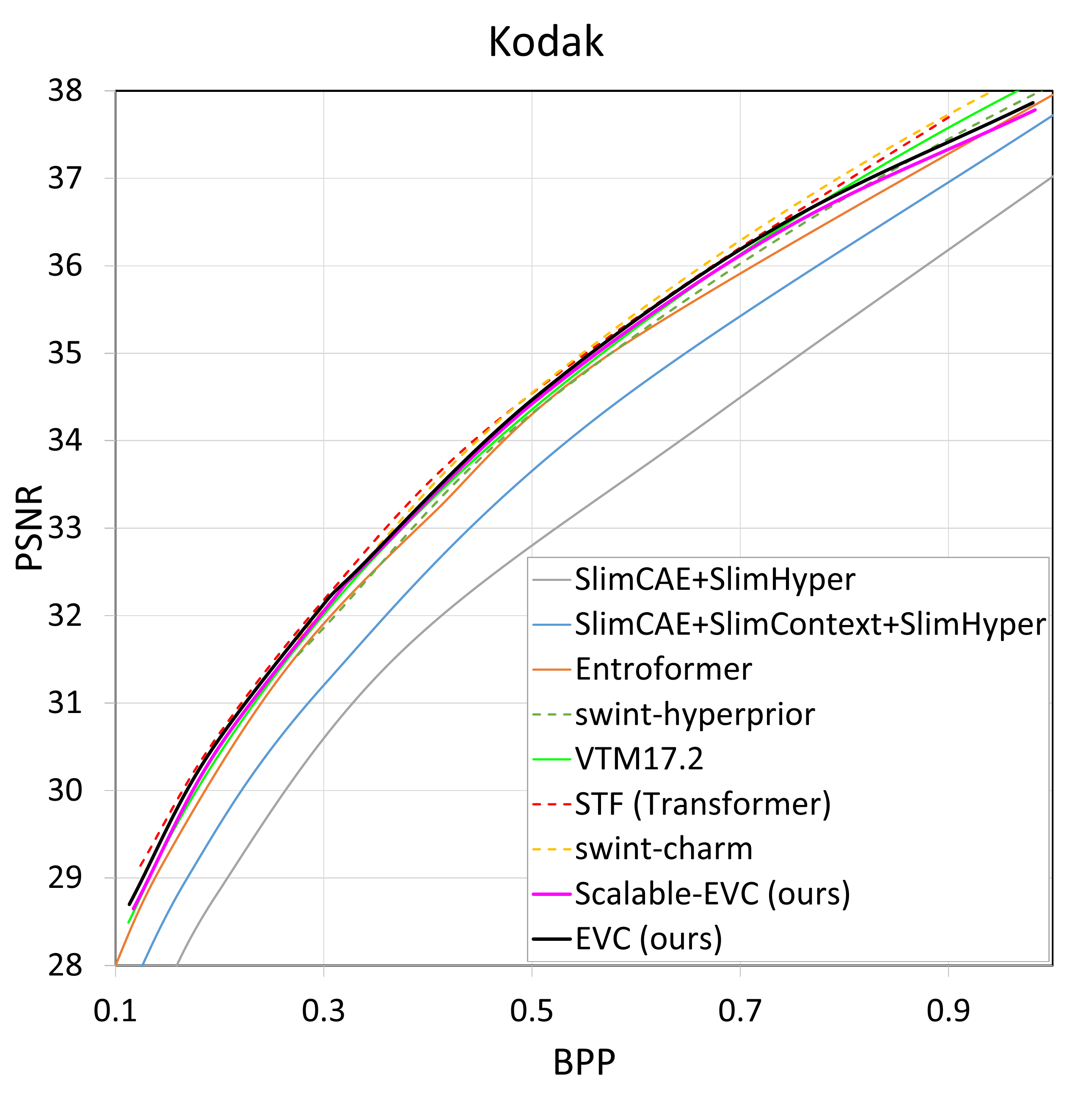}}
   \vspace{-2mm}
\caption{Rate Distortion curves for Tecnick and Kodak. Our models are on-par with swint~\citep{SwinT}, Entroformer~\citep{entroformer}, STF~\citep{STF}, and outperforms the traditional method VTM. Note that our models are dramatically faster than these methods (cf. Tab.~\ref{tab:speed}). }
\label{fig:sota} 
\vspace{-2mm}
\end{figure}

\begin{table}[t]
   \vspace{-2mm}
   \caption{Latency (ms) comparison. Lower means better. Entroformer~\citep{entroformer} and STF~\citep{STF} are the SOTA methods. `OM' means out of memory. Note that 2080Ti and A100 have 12GB and 80GB memory, respectively. \textbf{Bold} denotes running over 30 FPS. }
   \label{tab:speed}
   \vspace{-2mm}
   \begin{center}
   \small
   \setlength{\tabcolsep}{4pt}
   \begin{tabular}{c|c|c|c|cc|ccc}
   \toprule
   \multirow{2}*{Resolution} & \multirow{2}*{GPU} & \multirow{2}*{Type} & \multirow{2}*{Entroformer} & \multicolumn{2}{c|}{STF} & \multicolumn{3}{c}{EVC} \\ 
   & & & & Transformer & CNN & Large & Medium & Small \\
   \midrule
   \multirow{4}*{$768\times 512$} & \multirow{2}*{2080Ti} 
   & encoding & OM & 176.3 & 158.5 & 63.0 & 44.7 & \textbf{28.4} \\
   & & decoding & OM & 202.3 & 210.2 & 41.1 & \textbf{32.4} & \textbf{24.4} \\
   \cmidrule(r){2-9}
   & \multirow{2}*{A100} 
   & encoding & 816.8 & 115.9 & 96.4 & \textbf{21.1} & \textbf{19.8} & \textbf{17.7} \\
   & & decoding & 4361.9 & 143.2 & 118.0 & \textbf{19.1} & \textbf{17.1} & \textbf{15.6} \\
   \midrule
   \multirow{4}*{$1920\times 1080$} & \multirow{2}*{2080Ti} 
   & encoding & OM & 576.0 & 456.0 & 305.3 & 181.5 & 90.9 \\
   & & decoding & OM & 531.7 & 652.0 & 179.2 & 118.1 & 73.2 \\
   \cmidrule(r){2-9}
   & \multirow{2}*{A100} 
   & encoding & 7757.4 & 355.6 & 278.1 & 84.2 & 56.3 & \textbf{31.4} \\
   & & decoding & OM & 354.8 & 281.7 & 60.2 & 46.5 & \textbf{29.7} \\
   \bottomrule
   \end{tabular}
   \end{center}
   \vspace{-5mm}
\end{table}
\vspace{-1mm}
\subsection{Mask decay}
\vspace{-1mm}
Previous work~\citep{SwinT} finds small models suffer from poor RD performance. This problem is alleviated by our mask decay. We accelerate the encoder and the decoder respectively to study more cases. Note that the large model is the teacher and the decay rate is $4e-5$. Fig.~\ref{fig:md_single} and Tab.~\ref{tab:md} summarize experimental results. In Tab.~\ref{tab:md}, we also calculate the relative improvement as a percentage. For example, if BD-Rate of the baseline student, our student, and the baseline teacher are $1.1\%$, $-0.4\%$, and $-1.1\%$, respectively. The relative improvement is $\frac{1.1 - (-0.4)}{1.1 - (-1.1)}=68\%$. This metric indicates how much the performance gap between the student and the teacher our method narrows. RD curves are shown in Appendix Fig.~\ref{fig:kodak-LMS}. From experimental results, we conclude that: 
\begin{tight_itemize}
   \vspace{-2mm}
   \item Students are improved by mask decay with a significant margin. For medium and small models, relative improvements are about $50\%$ and $30\%$, respectively. This demonstrates the effectiveness of our method and the student benefits from reusing the teacher's parameters. Compared with SwinT-Hyperprior in Fig.~\ref{fig:md_single}, our EVC achieves a superior trade-off between the RD performance and the complexity. In addition, our models enjoy small sizes.
   \item The encoder is more redundant than the decoder. Note that our encoder and decoder are symmetric. We surprisingly find compressing the encoder enjoys less RD performance loss, compared with the decoder. With the large decoder, compressing the encoder from Large to Small loses $2.7\%$ BD-Rate on Kodak. On the other hand, only compressing the decoder loses $5.1\%$ BD-Rate. 
   \vspace{-2mm}
\end{tight_itemize}

\begin{table}[t]
   \caption{BD-Rate (\%) comparison for PSNR. L, M, and S denote Large, Medium, and Small, respectively. The anchor is VTM. Lower BD-Rate means better. MACs is for $1920\times 1088$ inputs. }
   \vspace{-2mm}
   \begin{center}
   \small
   \begin{tabular}{ccc|c|r@{.}lr@{.}lr@{.}lr@{.}lr@{.}l|r@{.}l}
   \toprule
   \multirow{2}*{Enc} & \multirow{2}*{Dec} & \multirow{2}*{MACs (G)} & \multirow{2}*{Method} &\multicolumn{10}{c}{HEVC} & \multicolumn{2}{|c}{\multirow{2}*{Kodak}} \\ 
   & & & & \multicolumn{2}{c}{B} & \multicolumn{2}{c}{C} & \multicolumn{2}{c}{D} & \multicolumn{2}{c}{E} & \multicolumn{2}{c}{Avg.} & \multicolumn{2}{|c}{} \\
   \midrule
   L & L & 1165.39 & Baseline & -5&2 & -0&9 & -0&9 & -3&5 & -2&63 & -1&1 \\
   \midrule
   \midrule
   \multirow{2}*{M} & \multirow{2}*{L} & \multirow{2}*{878.37} 
         & Baseline & -2&8 & 2&2 & 2&5 & 0&0 & 0&48 & 1&1 \\
    &  &  & Mask Decay & {\bf -4}&{\bf 5} & {\bf -0}&{\bf 2} & {\bf 0}&{\bf 2} & {\bf -3}&{\bf 0} & {\bf -1}&{\bf 88 (\textcolor{red}{76\%$\uparrow$})} & {\bf -0}&{\bf 4 (\textcolor{red}{68\%$\uparrow$})} \\
   \midrule
   \multirow{2}*{S} & \multirow{2}*{L} & \multirow{2}*{688.64} 
   & Baseline & 0&3 & 7&1 & 7&0 & 2&8 & 4&30 & 4&0 \\
   & &
   & Mask Decay & {\bf -1}&{\bf 9} & {\bf 2}&{\bf 9} & {\bf 3}&{\bf 5} & {\bf 0}&{\bf 3} & {\bf 1}&{\bf 20 (\textcolor{red}{45\%$\uparrow$})} & {\bf 1}&{\bf 6 (\textcolor{red}{47\%$\uparrow$})}\\
   \midrule
   \midrule
   \multirow{2}*{L} & \multirow{2}*{M} & \multirow{2}*{885.12} 
   & Baseline & -2&1 & 3&1 & 3&6 & 1&1 & 1&43 & 2&1 \\
    &  &  & 
    Mask Decay & {\bf -4}&{\bf 0} & {\bf 0}&{\bf 7} & {\bf 0}&{\bf 8} & {\bf -1}&{\bf 6} & {\bf -1}&{\bf 03 (\textcolor{red}{60\%$\uparrow$})} & {\bf 0}&{\bf 2 (\textcolor{red}{59\%$\uparrow$})} \\
   \midrule
   \multirow{2}*{L} & \multirow{2}*{S} & \multirow{2}*{700.38} 
   & Baseline & 2&1 & 9&5 & 10&7 & 7&7 & 7&50 & 7&0 \\
    &  &  & 
    Mask Decay & {\bf -0}&{\bf 4} & {\bf 5}&{\bf 6} & {\bf 6}&{\bf 9} & {\bf 3}&{\bf 3} & {\bf 3}&{\bf 85 (\textcolor{red}{36\%$\uparrow$})} & {\bf 4}&{\bf 0 (\textcolor{red}{37\%$\uparrow$})} \\
   \midrule
   \midrule
   \multirow{2}*{M} & \multirow{2}*{M} & \multirow{2}*{598.10} 
   & Baseline & -0&8 & 4&7 & 5&0 & 1&6 & 2&63 & 2&9 \\
    &  &  & 
    Mask Decay & {\bf -3}&{\bf 1} & {\bf 1}&{\bf 6} & {\bf 1}&{\bf 5} & {\bf -0}&{\bf 8} & {\bf -0}&{\bf 20 (\textcolor{red}{54\%$\uparrow$})} & {\bf 0}&{\bf 9 (\textcolor{red}{50\%$\uparrow$})} \\
   \midrule
   \multirow{2}*{S} & \multirow{2}*{S} & \multirow{2}*{223.63} 
   & Baseline & 6&3 & 14&7 & 15&6 & 10&6 & 11&80 & 10&0 \\
    &  &  & 
    Mask Decay & {\bf 2}&{\bf 6} & {\bf 9}&{\bf 5} & {\bf 10}&{\bf 6} & {\bf 6}&{\bf 5} & {\bf 7}&{\bf 30 (\textcolor{red}{31\%$\uparrow$})} & {\bf 6}&{\bf 9 (\textcolor{red}{28\%$\uparrow$})} \\
   \bottomrule
   \end{tabular}
   \end{center}
   \label{tab:md}
   \vspace{-4mm}
\end{table}

Appendix Fig.~\ref{fig:visual} visualizes our models' reconstructions. Reconstructions of different capacities models are consistent. Mask decay does not result in extra obvious artifacts. And our encoding and decoding time are dramatically shorter than VTM's. 

\subsection{The scalable encoder}

Fig.~\ref{fig:scalable} presents results for the scalable encoder. First, training separate encoders suffers from poor performance, which is due to the homogenization problem. Then, with our framework, training these encoders one-by-one to bridge the residual representations greatly improves the performance. Our RRL encourages these encoders' diversity. Finally, the mask decay treats the cumbersome encoder as a reference and makes training more efficient. More details can be found in Appendix~Sec.~\ref{sec:appendix:scalable}. 

Fig.~\ref{fig:sota} shows RD curves for our Scalable-EVC. Ours outperforms SlimCAE~\citep{SlimCAE} significantly by $15.5\%$ BD-Rate. Note that SlimCAE is a powerful model equipped with hyperprior~\citep{hyperprior}, conditional convolutions~\citep{choi2019variable}, and an autoregressive context model~\citep{AR}. Overall, Scalable-EVC is on-par other SOTA models.

\subsection{Ablation studies}

\textbf{Sparsity losses and decay rates.} Fig.~\ref{fig:Loss} summarizes experimental results for mask decay with different sparsity regularization losses and decay rates. For sparsity, $0\%$ and $100\%$ denote the teacher's and the student's structures, respectively. We adjust the decay rate $\eta$ for each sparsity loss to get reasonable sparsity decay curves. BD-Rate is calculated over VTM on Kodak. For our sparsity loss, all decay rates achieve better RD performance compared with the baseline. A suitable decay rate ($4e-5$) achieves the best BD-Rate. A large decay rate ($8e-5$) results in fast decay which destroys the model's performance in early epochs. This model suffers from the insufficient decay process, and the performance is easily saturated. Small decay rates ($1e-5$ and $2e-5$) waste too many epochs for mask decay and the model lacks finetuning. For L1, the model suffers from poor performance with both large and small decay rates. With a suitable rate ($1e-5$), it is still worse than ours with a significant margin. For L2, we find it hardly works because the gradient vanishes when the mask is close to zero.

\begin{figure}
   \vspace{-4mm}
   \centering
	\subcaptionbox{\label{fig:Lours}}{\includegraphics[width=0.32\linewidth]{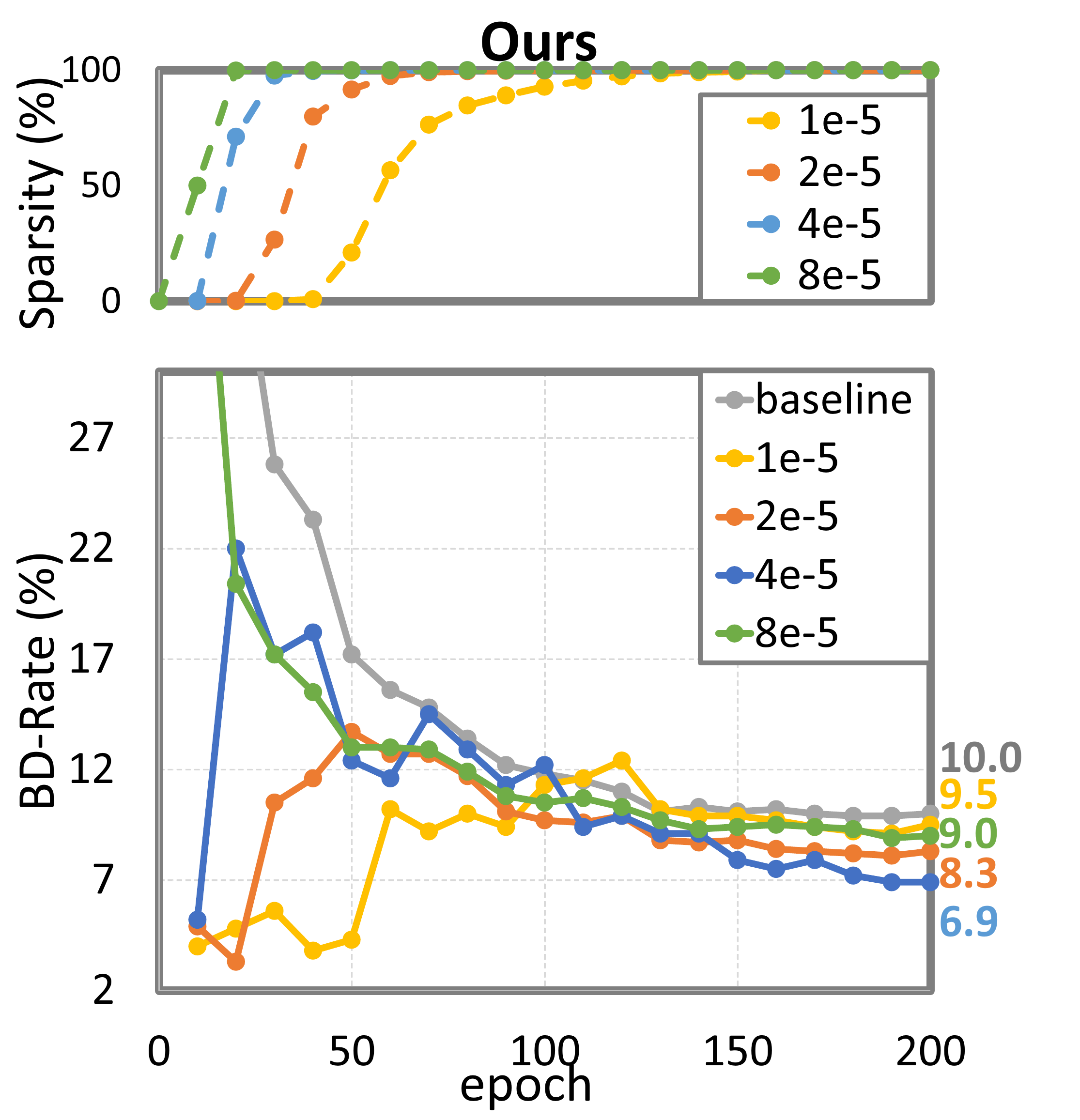}}
   \,
	\subcaptionbox{\label{fig:L1}}{\includegraphics[width=0.32\linewidth]{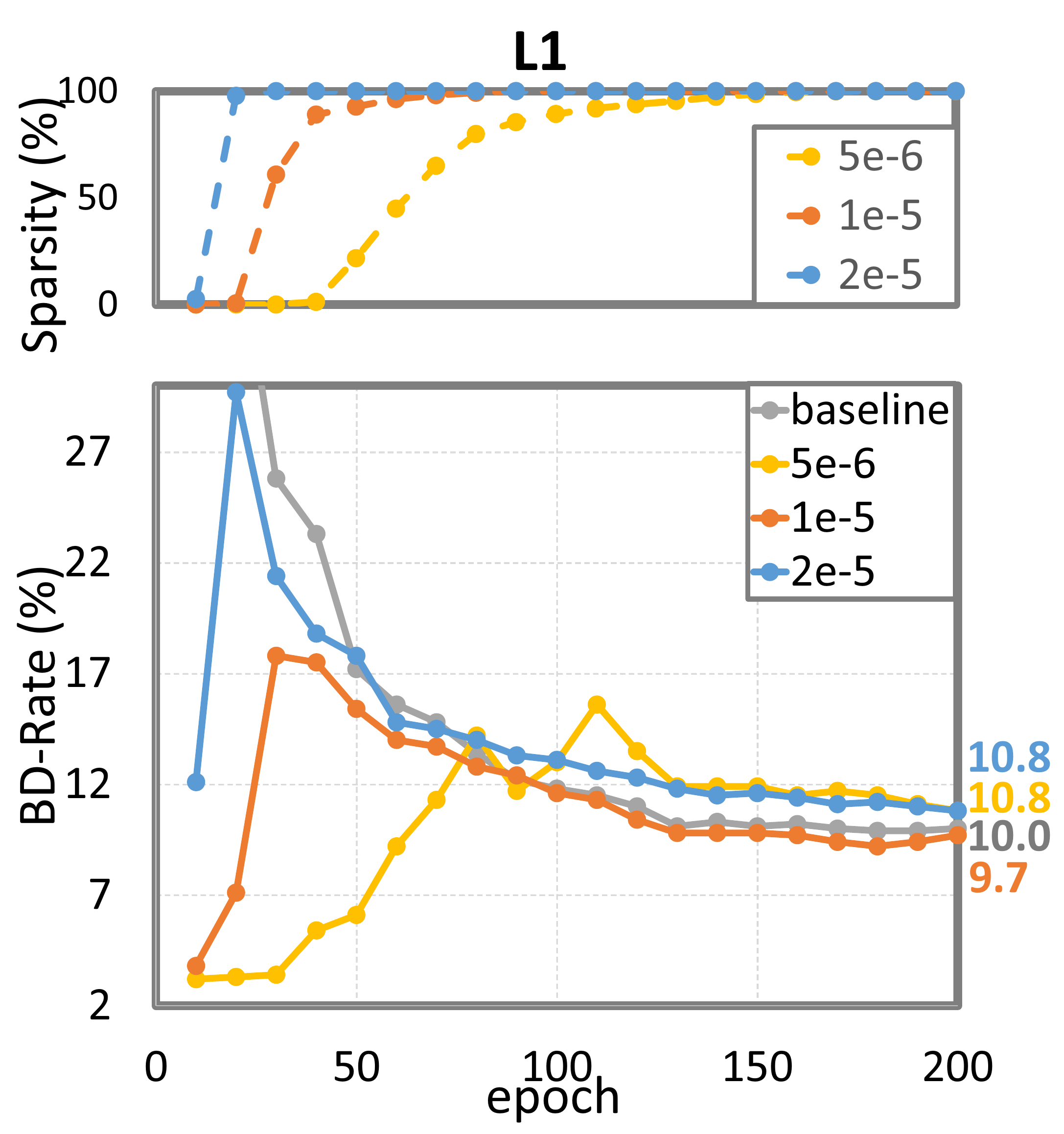}}
   \,
   \subcaptionbox{\label{fig:L2}}{\includegraphics[width=0.32\linewidth]{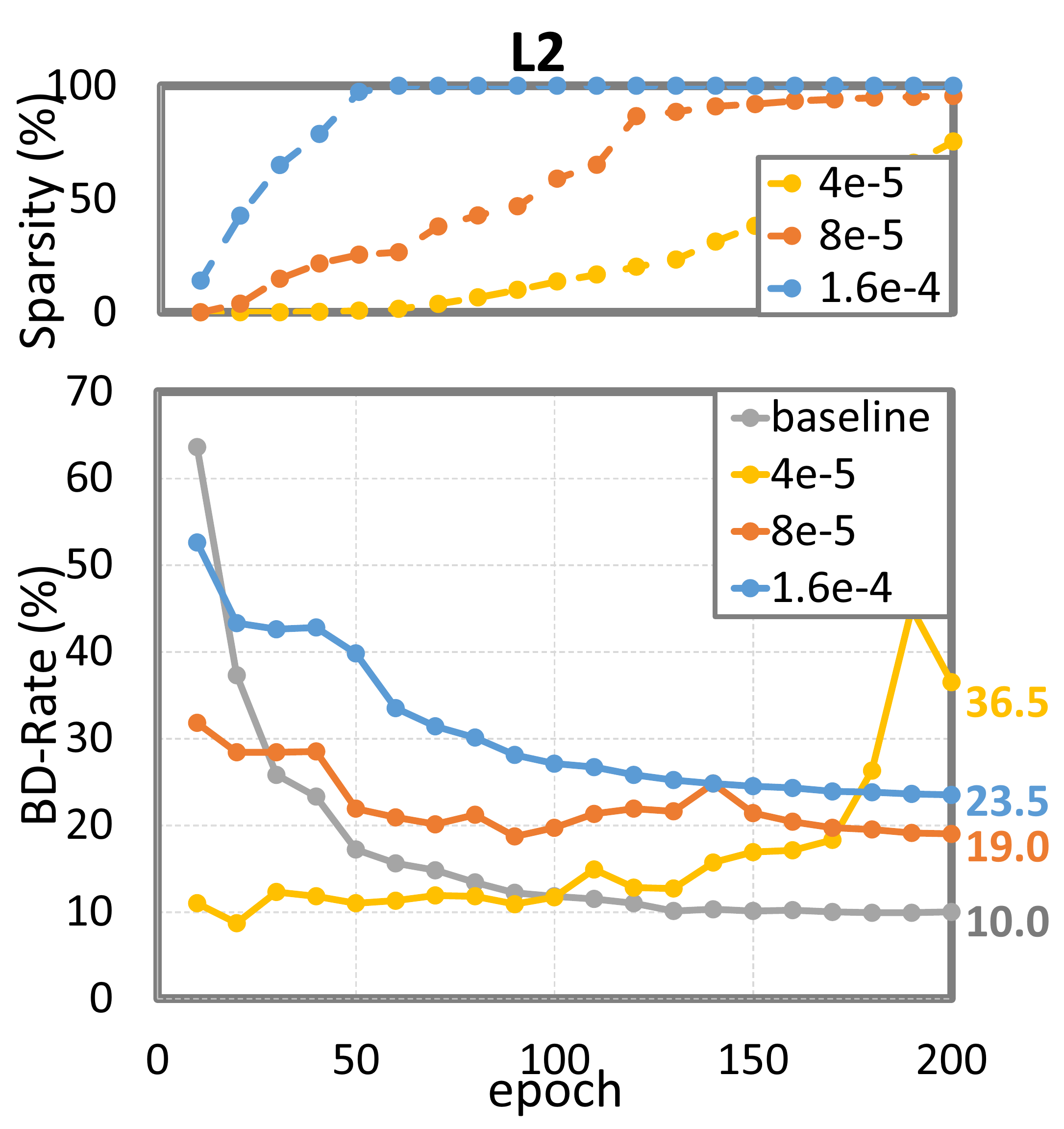}}
   \vspace{-2mm}
\caption{Ablation studies for different sparsity losses. We adjust the decay rate $\eta$ for each criterion. This figure is best viewed in color and zoomed in. }
\label{fig:Loss} 
\vspace{-4mm}
\end{figure}

\textbf{Discussions for important channels.} In the field of model pruning, many works~\citep{l1pruning,he2019filter,liebenwein2019provable,molchanov2019importance} study how to identify unimportant channels for pruning. However, recent works~\citep{liu2018rethinking,li2022revisiting} indicate randomly pruning channels results in comparable or even superior performance, which demonstrates there may be no importance difference among all channels. Previous works study mainly semantic tasks. We want to extend this conclusion for low-level tasks, i.e., image compression.

\begin{figure}
   \vspace{-4mm}
	\subcaptionbox{\label{fig:amd_stat}}{\includegraphics[width=0.5\linewidth]{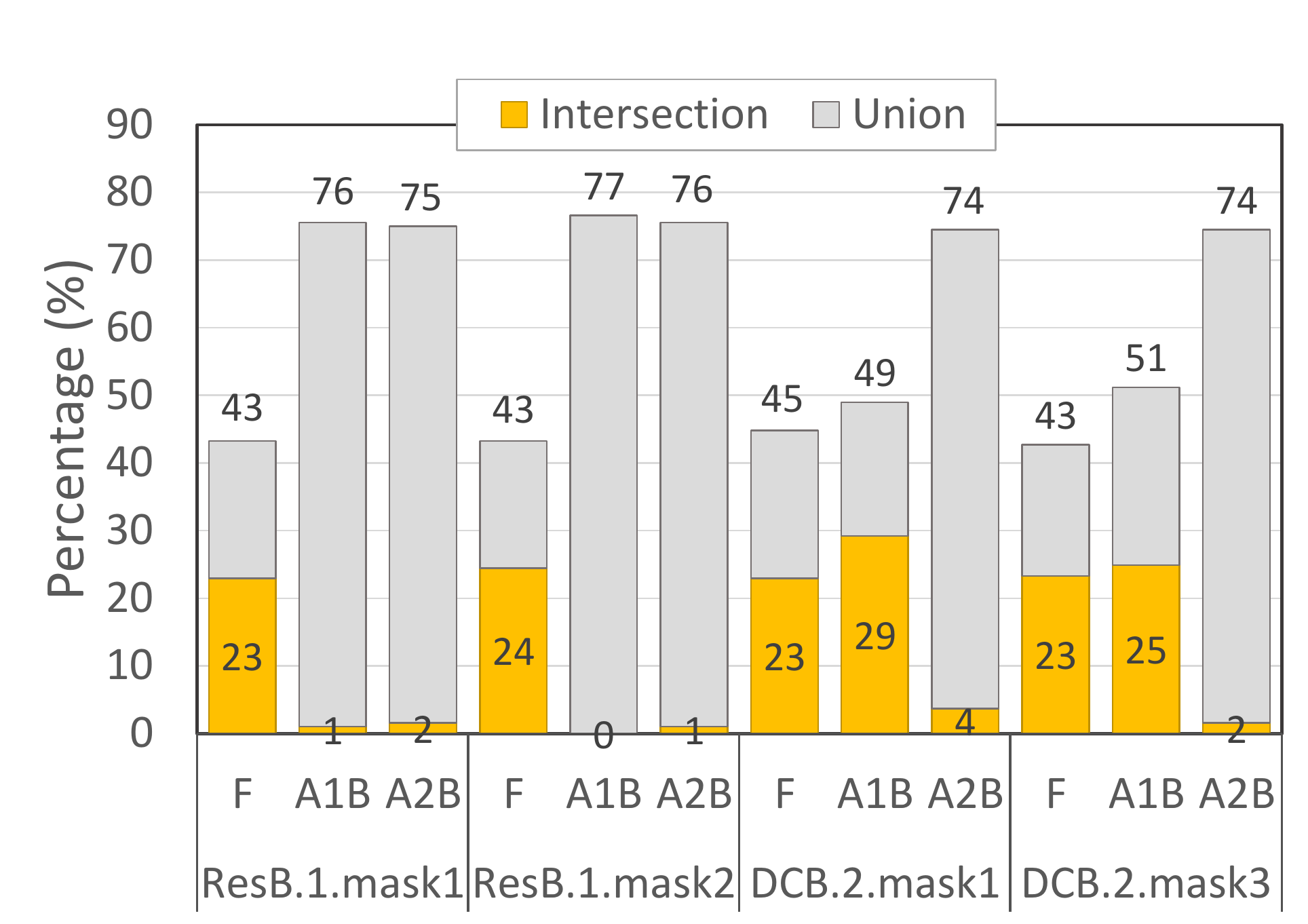}}
   \,
	\subcaptionbox{\label{fig:amd_curve}}{\includegraphics[width=0.45\linewidth]{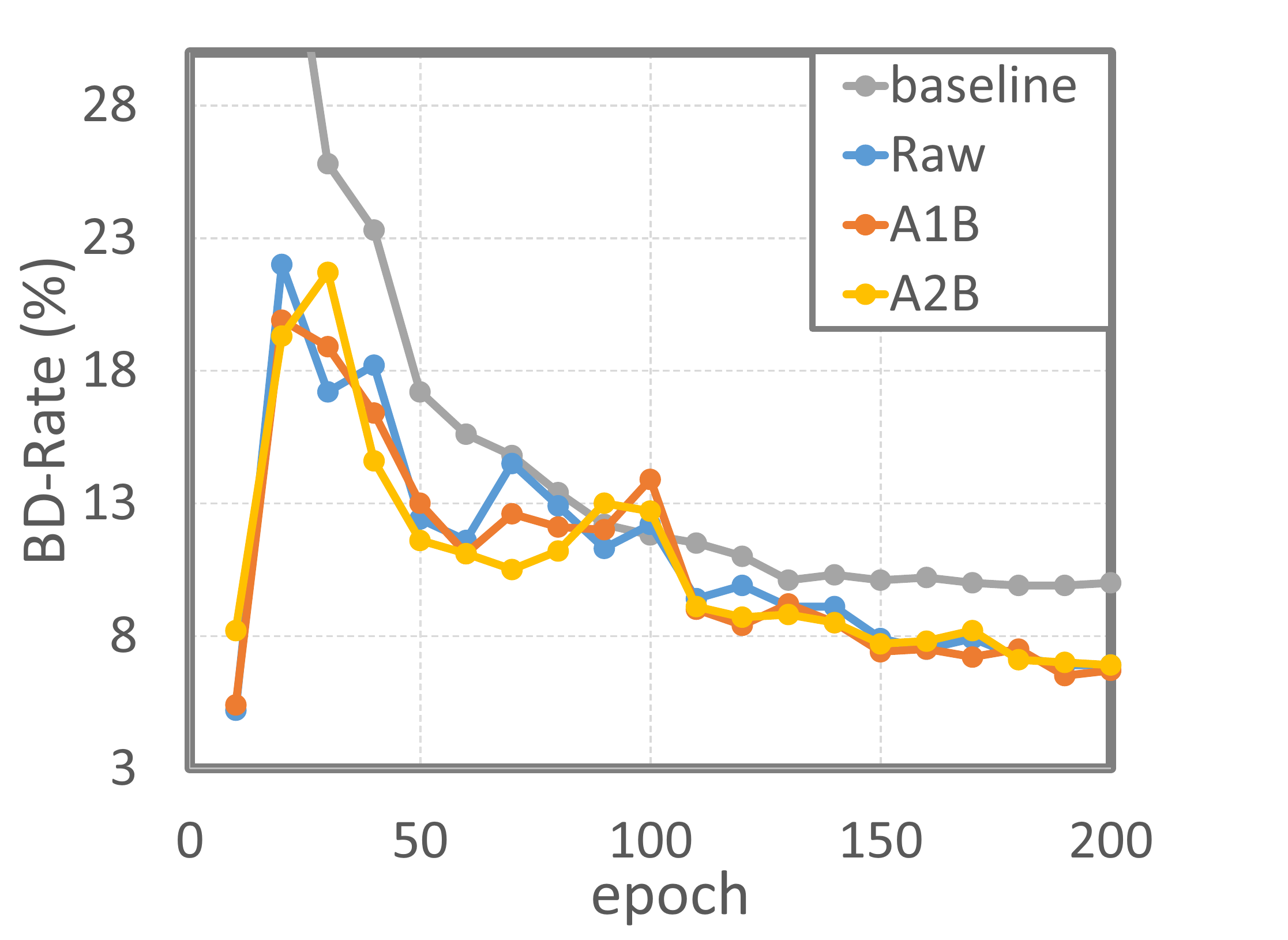}}
   \vspace{-2mm}
\caption{(\protect\subref{fig:amd_stat}) presents the percentage of channels w.r.t different mask decay manners in different layers. F, A1B, and A2B mean all channels are free, avoiding the first block, and avoiding the first 2 blocks, respectively. Numbers of the intersection of A$k$B ($k=1,2$) and F are small, which means our AMD indeed avoids channels in F. (\protect\subref{fig:amd_curve}) presents the BD-Rate over VTM on Kodak w.r.t training epochs. Raw denotes the curve for $\eta=4e-5$ in Figure~\ref{fig:Lours}. }
\label{fig:channel} 
\vspace{-4mm}
\end{figure}

First, we notice that our mask decay selects similar channels with different decay rates in Figure~\ref{fig:Lours}. We use $\sC_i$ to represent the set of chosen channels, where $i\in F=\{1e-5, 2e-5, 4e-5, 8e-5\}$ means the decay rate $\eta$. We check masks in the first two blocks, and \#channels of the teacher and the student are $192$ and $64$, respectively. Hence, about $33\%$ channels need to be kept after pruning. However, we find $\cap_{i\in F}\sC_i$ and $\cup_{i\in F}\sC_i$ contain $23\%$ and $43\%$ channels, respectively. Are these chosen channels important? Our answer is negative. To verify our opinion, we propose mask decay with avoiding specific channels. We divide all channels into two groups: freedom $\sF$ and avoidance $\sA$. To avoid channels in $\sA$, we set a large decay rate for them. Therefore, the loss of our \emph{Avoidable Mask Decay} (AMD) is 
\begin{equation}
   \mathcal{L}_{AMD}(x)= \eta_{f}\1_\mathrm{x\in \sF}\mathcal{L}_{sparse}(x) + \eta_{a}\1_\mathrm{x\in \sA}\mathcal{L}_{sparse}(x)\,
\end{equation}
where $\1_\mathrm{condition}$ is 1 if the condition is true, 0 otherwise. Defaultly, we set $\eta_{a} = 10 \eta_{f}$. Hence, avoidable channels enjoy a large decay rate to approach zero quickly. We conduct two experiments that avoid selecting channels in $\cup_{i\in F}\sC_i$ for the first one or two blocks, respectively. Figure~\ref{fig:amd_stat} illustrates our AMD indeed avoid channels. Specifically, we calculate $\sC_{AkB}\cap\sC_{F}$ and $\sC_{AkB}\cup\sC_{F}$, where $k=1,2$ denotes avoiding channels in the first $k$ blocks and $\sC_{F}=\cup_{i\in F}\sC_i$. At most $1\%$ channels are chosen by both A1D and F in the first block, and A2B avoids channels of F in both two blocks. Figure~\ref{fig:amd_curve} shows the BD-Rate for each epoch. From it, we find AMD results in comparable or a little bit superior performance compared to the raw mask decay.

Overall, if some channels are important, we can avoid these channels to achieve comparable performance. That demonstrates there are no important channels. For future works, this conclusion guides us to pay more attention to the optimization process rather than the importance of each channel.

\section{Conclusions and future works}

We proposed EVC for image compression. With $768\times 512$ inputs, EVC is able to run at 30 FPS, while the RD performance is on-par with SOTA neural codecs and superior to the SOTA traditional codec (H.266/VVC). With an adjustable quantization step, EVC handles variable RD trade-off within only one model. These three signs of progress make a new milestone in the field of image codec. By reducing the complexities of both encoder and decoder, a small model is proposed, which achieves 30 FPS for 1080P inputs. We proposed mask decay for training this small model with the help of the cumbersome model. And a novel sparsity regularization loss is designed to alleviate the drawbacks of L1 and L2 losses. Thanks to our algorithm, our medium and small models are improved significantly by 50\% and 30\%, respectively. At last, the scalable encoder is advocated. With only one decoder, there are several encoders for different complexities. With residual representation learning and mask decay, the scalable encoder narrows the performance gap between the teacher. And our scalable EVC outperforms previous SlimCAE by a significant margin.

We believe this paper will inspire many future works. First, we point out the encoder is more redundant than the decoder. Hence, improving the decoder is promising. Second, this paper mainly focuses on the pre-defined student structure. How to determine the complexity of each layer is still an open problem for image compression models. Third, it is also encouraged to find a more powerful pruning algorithm. But according to our AMD, we advise paying more attention to the optimization process rather than the importance criterion for each filter.

\bibliography{ref}
\bibliographystyle{iclr2023_conference}

\newpage
\appendix
\begin{center}
	\LARGE
	\textbf{EVC: Towards Real-Time Neural Image Compression with Mask Decay\\-Appendix-}\\\
	\\	
\end{center}

\section{The framework of EVC}
\label{app-sec:model}

\begin{figure}[h]
   \begin{center}
   \includegraphics[width=0.8\linewidth]{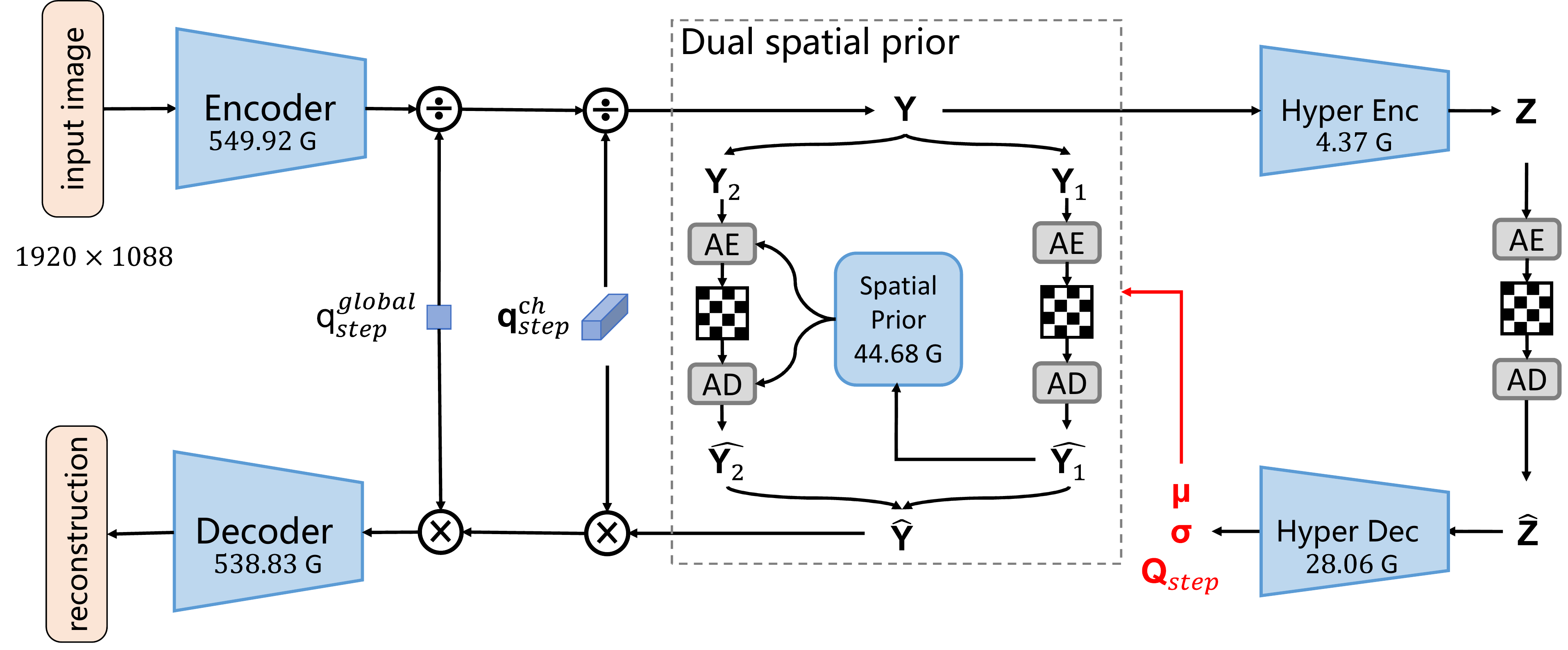}
   \end{center}
   \caption{The overall framework of EVC. }
   \label{fig:framework2}
\end{figure}

Our EVC for image compression is illustrated in Fig.~\ref{fig:framework2}. Next, we introduce each module in detail.

\textbf{The encoder and the decoder.} The structures of these two modules are illustrated in Fig.~\ref{fig:arch-enc-dec}. The block output dimensionalities are predefined by $C_1$, $C_2$, $C_3$, and $C_4$, respectively. Downsampling is implemented by setting the stride as $2$ in specific conv. layers, while upsampling is achieved by the sub-pixel convolution~\citep{pixelshuffle}. In each residual block, the kernel size of both Conv \#1 and \#2 is $3\times 3$, while Conv \#3's is $1\times 1$. And Conv \#1 enlarges the channel number to the output dimensionality. In each depth-conv block, only the depth conv. layer uses $3\times 3$ conv., while others adopt $1\times 1$ conv. Conv \#3 enlarges the channels by 4 times, while Conv \#4 converts it back.

\textbf{The hyper encoder and the hyper decoder.} Fig.~\ref{fig:arch-hyperprior} describes the structures of our hyperprior. Note that we use $3\times 3$ conv. in the hyper encoder. 

\begin{figure}[h]
	\subcaptionbox{\label{fig:hyper-enc}}{\includegraphics[width=0.44\linewidth]{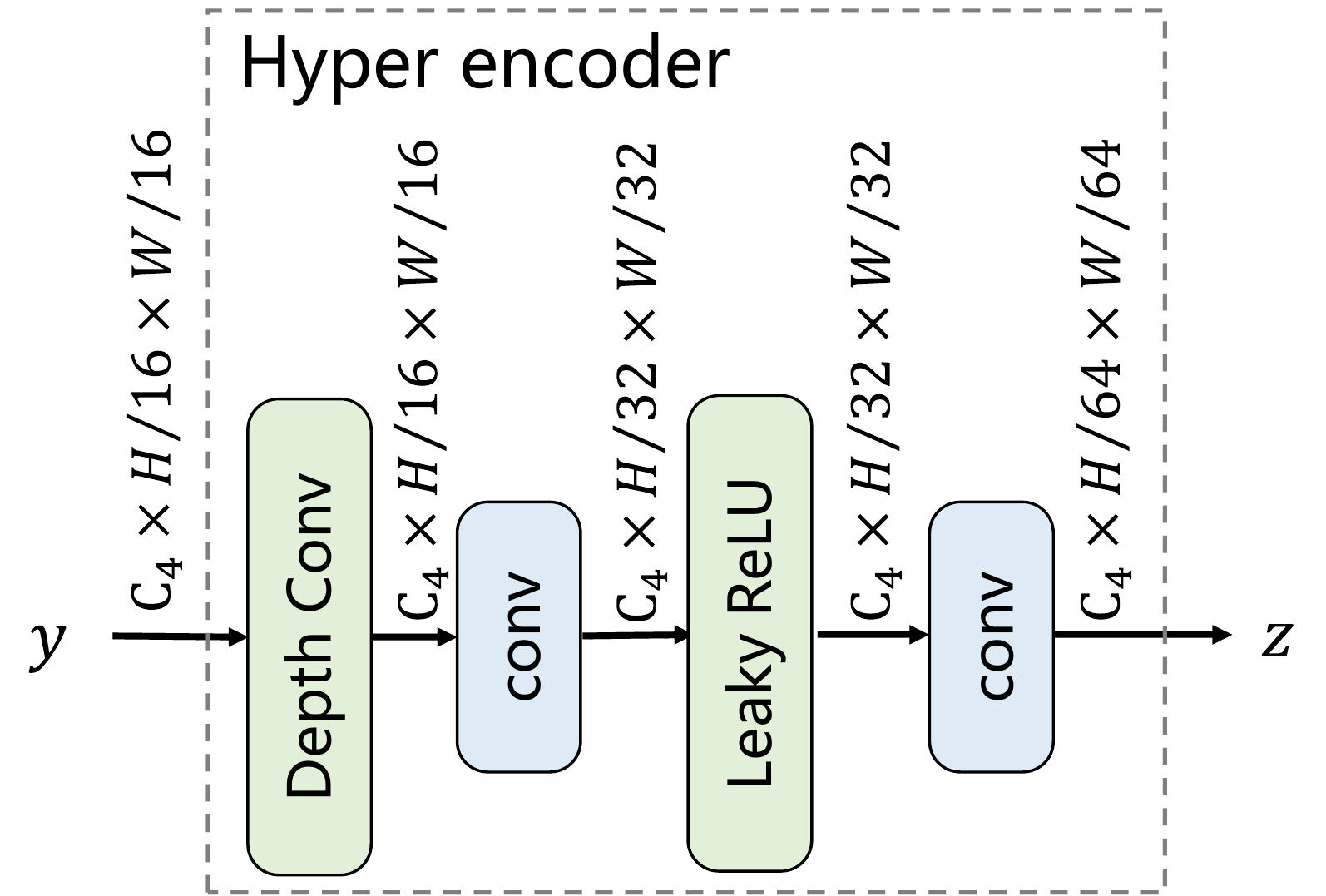}}
   \quad
	\subcaptionbox{\label{fig:hyper-dec}}{\includegraphics[width=0.54\linewidth]{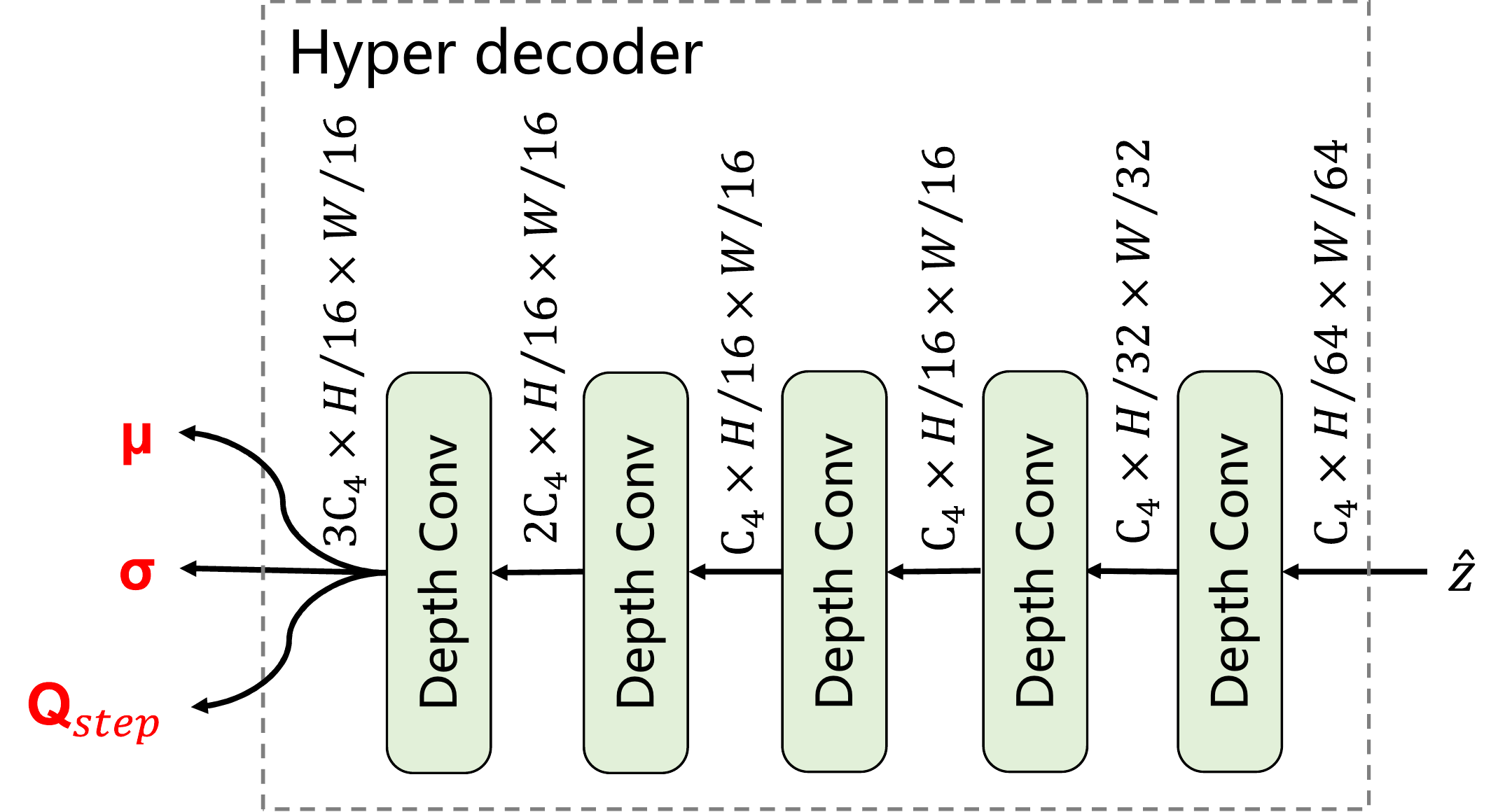}}
\caption{(\protect\subref{fig:hyper-enc}) and (\protect\subref{fig:mask-loss}) present the structures of our hyper encoder and hyper decoder, respectively. }
\label{fig:arch-hyperprior} 
\end{figure}

\textbf{The dual spatial prior.} Following \citet{checkerboard, li2022hybrid, entroformer}, we adopt dual spatial prior to reduce the spatial redundancy. The detail structure is illustrated in Fig.~\ref{fig:dual_spatial_prior}. The representation $\tY$ is split into two parts $\tY_1$ and $\tY_2$ spatially. $\tY_1$ is first coded according to the side information from the hyperprior. Then, given the reconstruction $\hat{\tY}_1$, $\tY_2$ is estimated more effectively. Our dual spatial prior shares a similar idea of auto-regressive prior~\citep{AR} that predicts the target conditioning on its neighbor. But ours enjoys parallel computation and a fast inference speed.

\begin{figure}
   \begin{center}
   \includegraphics[width=0.9\linewidth]{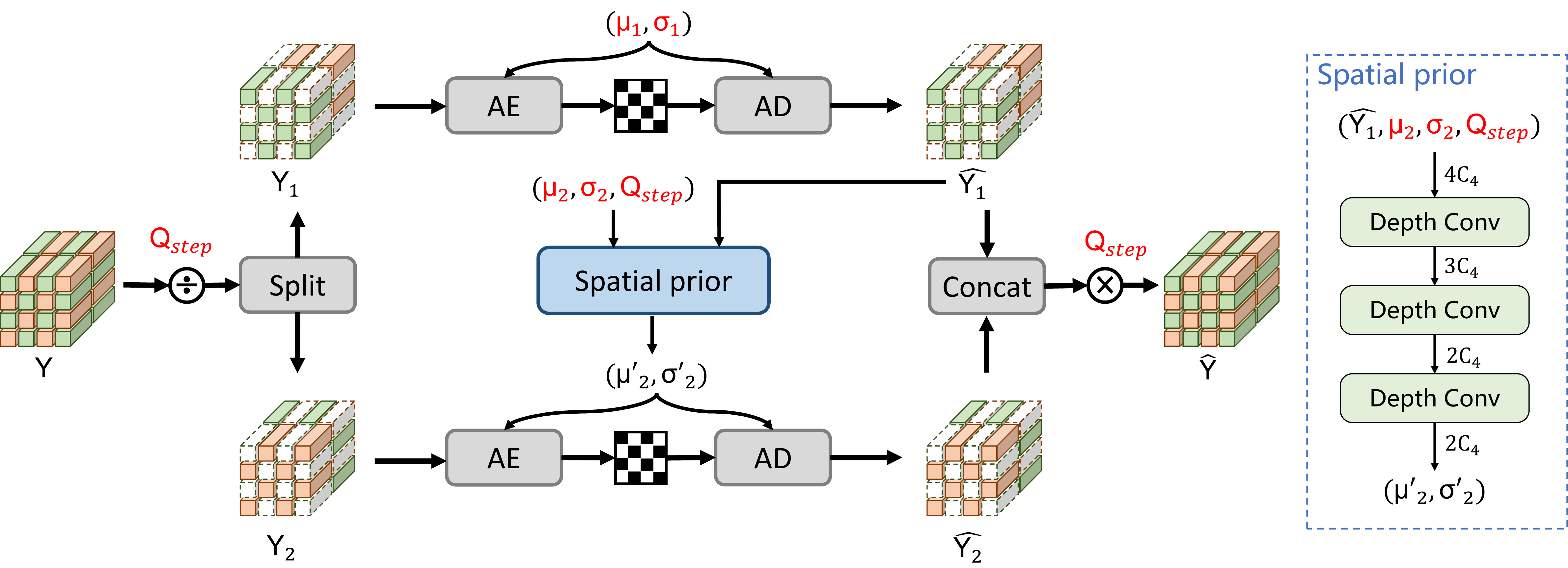}
   \end{center}
   \caption{The structure of our dual spatial prior. }
   \label{fig:dual_spatial_prior}
\end{figure}

\textbf{Schemes of models with different capacities.} Tab.~\ref{tab:scheme-1080p} summarizes three models used in this paper. We also compare the model size and MACs for $1920\times 1088$ inputs.

\begin{table}[h]
   \caption{Architecture schemes with statistics of models' \#Params (MB) and MACs (G). Note that Others include of the hyperprior model, the spatial prior model, and the entropy model. The resolution of the input image is $1920\times 1088$. }
   \label{tab:scheme-1080p}
   \begin{center}
   \small
   \begin{tabular}{cc|cSSSS}
   \toprule
   Model & $C_1, C_2, C_3, C_4$ & & {Encoder} & {Decoder} & {Others} & {Total} \\ 
   \midrule
   \multirow{2}*{\textbf{L}arge (\textbf{L})} &
   \multirow{2}*{192, 192, 192, 192}
   & \#Params & 3.19 & 3.38 & 10.82 & 17.38 \\
   & & MACs & 549.92 & 538.83 & 76.64 & 1165.39 \\ 
   \midrule
   \multirow{2}*{\textbf{M}edium (\textbf{M})} &
   \multirow{2}*{128, 128, 192, 192}
   & \#Params & 2.08 & 2.33 & 10.82 & 15.23 \\
   & & MACs & 262.90 & 258.56 & 76.64 & 598.10 \\ 
   \midrule
   \multirow{2}*{\textbf{S}mall (\textbf{S})} &
   \multirow{2}*{64, 64, 128, 192}
   & \#Params & 0.82 & 1.14 & 10.82 & 12.78 \\
   & & MACs & 73.17 & 73.82 & 76.64 & 223.63 \\ 
   \bottomrule
   \end{tabular}
   \end{center}
\end{table}

\section{Mask decay}
\label{sec:appendix:mask_decay}

\textbf{The pipeline of our model acceleration method.} Figure~\ref{fig:mask_decay_overall} shows how to accelerate a cumbersome network by our mask decay. First, this cumbersome network is trained and treated as the pretrained teacher. Then, mask layers are inserted in the teacher and are optimized to become sparse. After training, masks are merged into neighbor layers, so that the cumbersome teacher transforms into an efficient student network.

\begin{figure}[h]
   \begin{center}
   \includegraphics[width=1\linewidth]{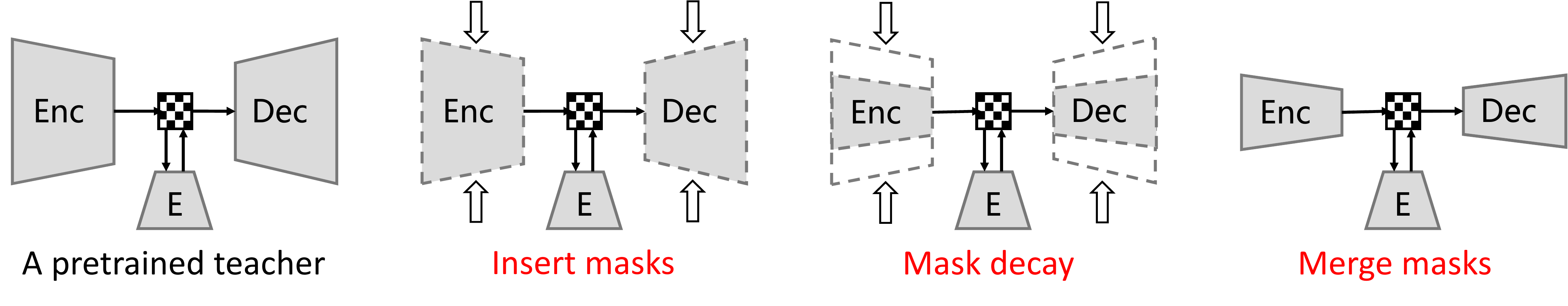}
   \end{center}
   \caption{Illustration of our model acceleration process. ``Enc'' and ``Dec'' denote the encoder and the decoder, respectively. And ``E'' represents the entropy model with the hyperprior model.}
   \label{fig:mask_decay_overall}
\end{figure}

\textbf{Compress one layer's representation by mask decay.} In Fig.~\ref{fig:mask-arch}, our mask contains $N_2$ parameters which are initialized by $1$. After training, $N_s$ parameters remain to be positive while $N_2 - N_s$ parameters become zeros. Then, we conduct two steps to ``merge'' this mask. First, for $N_2 - N_s$ zero channels, corresponding filters in Conv \#1 and \#2 are pruned safely. Second, for $N_s$ positive parameters, they are multiplied into Conv \#2. After these two steps, the mask can be removed. Claim~\ref{claim:merge} guarantees that these two steps will not change the outputs. Hence, our sparse mask is able to be eliminated safely.

\begin{claim}
   \label{claim:merge}
   Our mask can be merged into neighbor convolution layers while keeping the outputs unchanged.
\end{claim}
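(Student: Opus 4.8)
The plan is to reduce the claim to the elementary fact that a channel-wise rescaling of a feature map can be absorbed into the weights of any convolution that consumes that feature map linearly. First I would fix notation: let $\vx$ be the input to Conv \#1, let $\vz$ denote the feature map that the mask multiplies (with $N_2$ channels), and let the mask act as $\tilde{\vz} = \vm \odot \vz$, i.e. $\tilde{z}_c = m_c\, z_c$ for each channel $c \in \{1,\dots,N_2\}$. Writing Conv \#2 channel-by-channel, its $j$-th output channel is
\begin{equation}
   o_j = b^{(2)}_j + \sum_{c=1}^{N_2} W^{(2)}_{j,c} * \tilde{z}_c,
\end{equation}
where $*$ is spatial convolution, $W^{(2)}_{j,c}$ is the kernel slice connecting input channel $c$ to output channel $j$, and $b^{(2)}_j$ is the bias.

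The key step is the homogeneity identity $W * (m_c z_c) = (m_c W) * z_c$, which holds because convolution is linear in its argument. Substituting $\tilde{z}_c = m_c z_c$ gives
\begin{equation}
   o_j = b^{(2)}_j + \sum_{c=1}^{N_2} \big(m_c W^{(2)}_{j,c}\big) * z_c,
\end{equation}
so defining the merged kernel $\tilde{W}^{(2)}_{j,c} := m_c W^{(2)}_{j,c}$ and feeding $\vz$ directly into this new Conv \#2 yields an output identical to the masked network, with the mask gone. This justifies the ``multiply the mask into Conv \#2'' step for the $N_s$ positive entries; note that $z_c$ already carries Conv \#1's bias, so no adjustment to Conv \#1 is needed for the retained channels.

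For the $N_2 - N_s$ entries with $m_c = 0$, the corresponding merged slice $\tilde{W}^{(2)}_{j,c} = 0$ for every $j$, so input channel $c$ contributes $0 * z_c = 0$ and can be deleted from Conv \#2's input dimension without affecting any $o_j$. Moreover, once channel $c$ is never read downstream, the Conv \#1 filter producing $z_c$ is dead: its output is discarded, so removing that output filter (together with its bias) leaves $\vo$ unchanged. This is precisely the ``prune Conv \#1 and \#2'' step, and combined with the previous paragraph it shows the full merge preserves the output exactly.

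The algebra itself is routine; the real content is verifying that the hypotheses under which it applies actually hold in the network. Absorption into $\tilde{W}^{(2)}$ requires that \emph{no nonlinearity sit between the mask and Conv \#2} (otherwise $\phi(m_c z_c) \neq m_c \phi(z_c)$), so I would confirm from Fig.~\ref{fig:arch-enc-dec} that every mask is immediately followed by a convolution. The main obstacle is handling the shared and post-addition masks: Mask \#2/\#4 are placed \emph{after} the residual addition, so the rescaled map is $\vx + F(\vx)$ rather than a single conv output. The same identity still applies to whichever linear layer consumes the block output, but pruning a zeroed channel now entails removing the matching channel from both the residual branch's final conv and the skip path, and checking that these deletions are consistent across masks that share weights is the one part that requires care rather than calculation.
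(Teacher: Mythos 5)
Your overall strategy --- absorb the positive mask entries into the next convolution by linearity, then delete the channels whose mask entry is zero --- is the same two-step merge the paper uses, and the algebra you give for a mask feeding \emph{directly} into Conv \#2 is correct. But the hypothesis you defer to a figure check (``confirm that every mask is immediately followed by a convolution'') is exactly where the attempt breaks: in the actual block the order is Conv \#1 $\to$ mask $\to$ LeakyReLU $\to$ Conv \#2, so a nonlinearity \emph{does} sit between the mask and the layer you want to absorb it into. Your own observation that $\phi(m_c z_c)\neq m_c\,\phi(z_c)$ in general then leaves the central absorption step unjustified for the network as built.

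The missing idea --- the one genuinely non-routine ingredient in the paper's proof --- is that LeakyReLU is positively homogeneous, $f(\alpha x)=\alpha f(x)$ for $\alpha\ge 0$, and the mask entries are nonnegative by construction (initialized at $1$ and decayed toward $0$, with the survivors positive). Hence $f(\tY_1\odot_c\vm)=f(\tY_1)\odot_c\vm$: the mask commutes past the activation, and only after that does your homogeneity identity $\mW_2\circledast\left(f(\tY_1)\odot_c\vm\right)=\left(\mW_2\odot_c\vm\right)\circledast f(\tY_1)$ apply. With that commutation supplied, the rest of your argument (zero mask entries give identically zero kernel slices, so the corresponding output filters of Conv \#1 and input slices of Conv \#2 are dead and can be removed without changing $\tY_4$) matches the paper's proof. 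Your closing caution about the shared, post-addition masks is sensible but peripheral here: the claim as proved in the appendix concerns the single mask sandwiched between two convolutions, and the residual-placement issue is handled separately in the architecture discussion.
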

\begin{proof}
   As showed in Fig.~\ref{fig:mask-arch}, assume $\mW_1$ and $\vb_1$ are parameters in pruning Conv \#1, and $\mW_2$ and $\vb_2$ are parameters in pruning Conv \#2. The LeakyReLU applies the element-wise function:
   \begin{equation}
      f(x)=
      \begin{cases}
         x, &\text{if $x\geq 0$}\,,\\
         \text{negative\_slope} \times x, &\text{otherwise}\,.\\
      \end{cases}
   \end{equation}
   Then given the input $\tX\in R^{N_1\times H\times W}$, the forward computation of the pruning model (the left part in Fig.~\ref{fig:mask-arch}) is 
   \begin{align}
      \tY_1 &= \mW_1\circledast \tX+\vb_1\,, &(&\tY_1\in R^{N_2\times H\times W}, \mW_1\in R^{N_2\times N_1\times 3\times 3}, \vb_1\in R^{N_2})\,,\\
      \tY_2 &= \tY_1\odot_c \vm\,, & (&\tY_2\in R^{N_2\times H\times W}, \vm\in R^{N_2})\,,\\
      \tY_3 &= f(\tY_2)\,, & (&\tY_3\in R^{N_2\times H\times W})\,,\\
      \tY_4 &= \mW_2\circledast \tY_3 + \vb_2\,, & (&\tY_4\in R^{N_3\times H\times W}, \mW_2\in R^{N_3\times N_2\times 3\times 3}, \vb_2\in R^{N_3})\,,
   \end{align}
   where $\circledast$ denotes the convolution operator, and $\odot_c$ means the channel-wise product. Note that $N_s$ elements of $\vm$ are positive, while other $N_2-N_s$ entries are zeros. In the first step, these $N_2-N_s$ zero channels are pruned safely, and the forward computation becomes
   \begin{align}
      \tY'_1 &= \mW'_1\circledast \tX+\vb'_1\,, &(&\tY'_1\in R^{N_s\times H\times W}, \mW'_1\in R^{N_s\times N_1\times 3\times 3}, \vb'_1\in R^{N_s})\,,\\
      \tY'_2 &= \tY'_1\odot_c \vm'\,,& (&\tY'_2\in R^{N_s\times H\times W}, \vm'\in R^{N_s})\,,\\
      \tY'_3 &= f(\tY'_2)\,, & (&\tY'_3\in R^{N_s\times H\times W})\,,\\
      \tY_4 &= \mW'_2\circledast \tY'_3 + \vb_2 & (&\tY_4\in R^{N_3\times H\times W}, \mW'_2\in R^{N_3\times N_s\times 3\times 3}, \vb_2\in R^{N_3})\,.
   \end{align}
   Then, in the second step, $N_s$ positive entries in $\vm'$ will be multiplied into $\mW'_2$. We have
   \begin{align}
      \tY_4 &= \mW'_2\circledast f(\tY'_1\odot_c \vm') + \vb_2\,,\\
      & = \mW'_2\circledast(f(\tY'_1)\odot_c \vm') + \vb_2\,,\\
      & = (\mW'_2\odot_c \vm')\circledast f(\tY'_1) + \vb_2\,.
      \label{eq:merge-right}
   \end{align}
   The last equation holds for the linearity property of the convolution operator. Notice that Eq.~\ref{eq:merge-right} denotes the computation process in the pruned model (the right part in Fig.~\ref{fig:mask-arch}). Hence, the pruned model also outputs $\tY_4$ with the input $\tX$. That means the outputs are unchanged. By merging the mask $\vm$, parameters within Conv \#1 and \# 2 become ($\mW'_1$, $\vb'_1$) and ($\mW'_2\odot_c \vm'$, $\vb_2$), respectively.
\end{proof}

\begin{figure}
   \centering
	\subcaptionbox{\label{fig:prune-res-1}}{\includegraphics[width=0.2\linewidth]{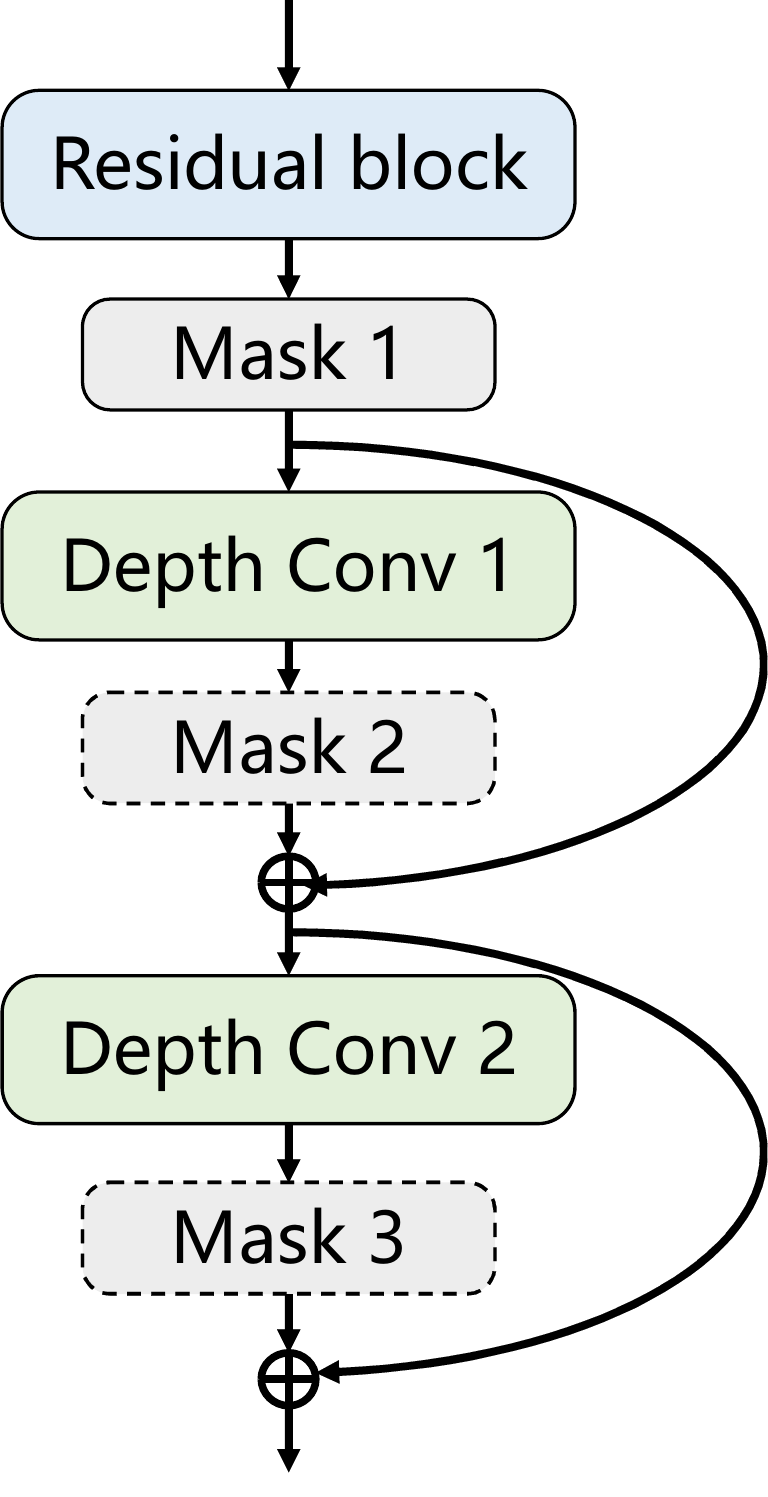}}
   \qquad
	\subcaptionbox{\label{fig:prune-res-2}}{\includegraphics[width=0.2\linewidth]{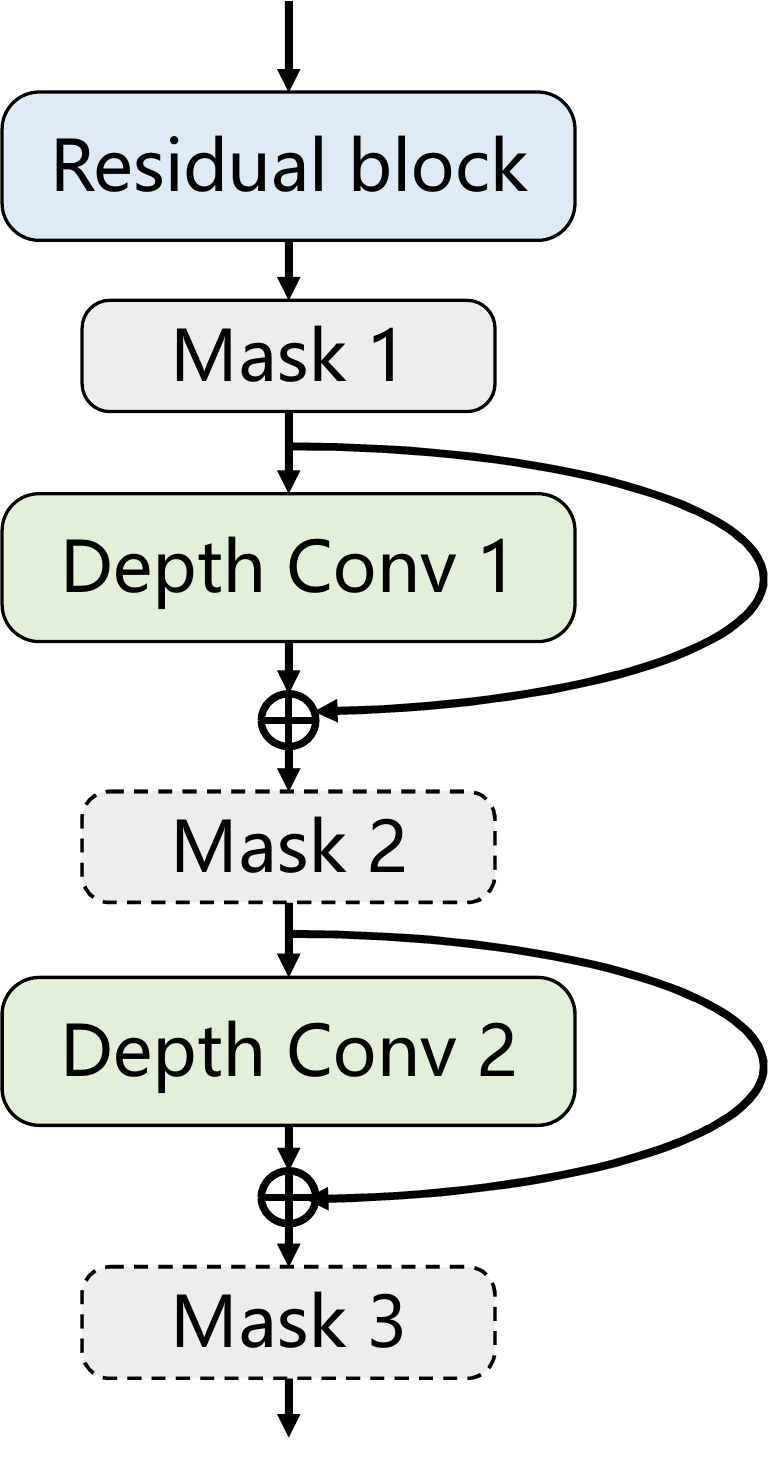}}
\caption{(\protect\subref{fig:prune-res-1}) and (\protect\subref{fig:prune-res-2}) present two manners to insert masks for the block-level dimensionality. (\protect\subref{fig:prune-res-1}) is better because it guarantees the identity gradient back-propagation. }
\label{fig:prune-res} 
\end{figure}

\textbf{Mask decay for our encoder and decoder.} How to prune the channels outside the residual connection is also a crucial problem. Fig.~\ref{fig:prune-res} shows two manners. Note that Mask \#1, \#2, and \#3 share weights. If we insert Mask \#2 and \#3 after the residual adding (as in Fig.~\ref{fig:prune-res-2}), the identity output will be degraded by these masks, because most masks are in $[0, 1]$. Inserting masks as in Fig.~\ref{fig:prune-res-1} mitigates this problem and guarantees the identity gradient back-propagation.

\section{The scalable encoder}
\label{sec:appendix:scalable}

In this part, we introduce more details for different methods to achieve a scalable encoder.

\begin{figure}
   \begin{center}
   \includegraphics[width=0.6\linewidth]{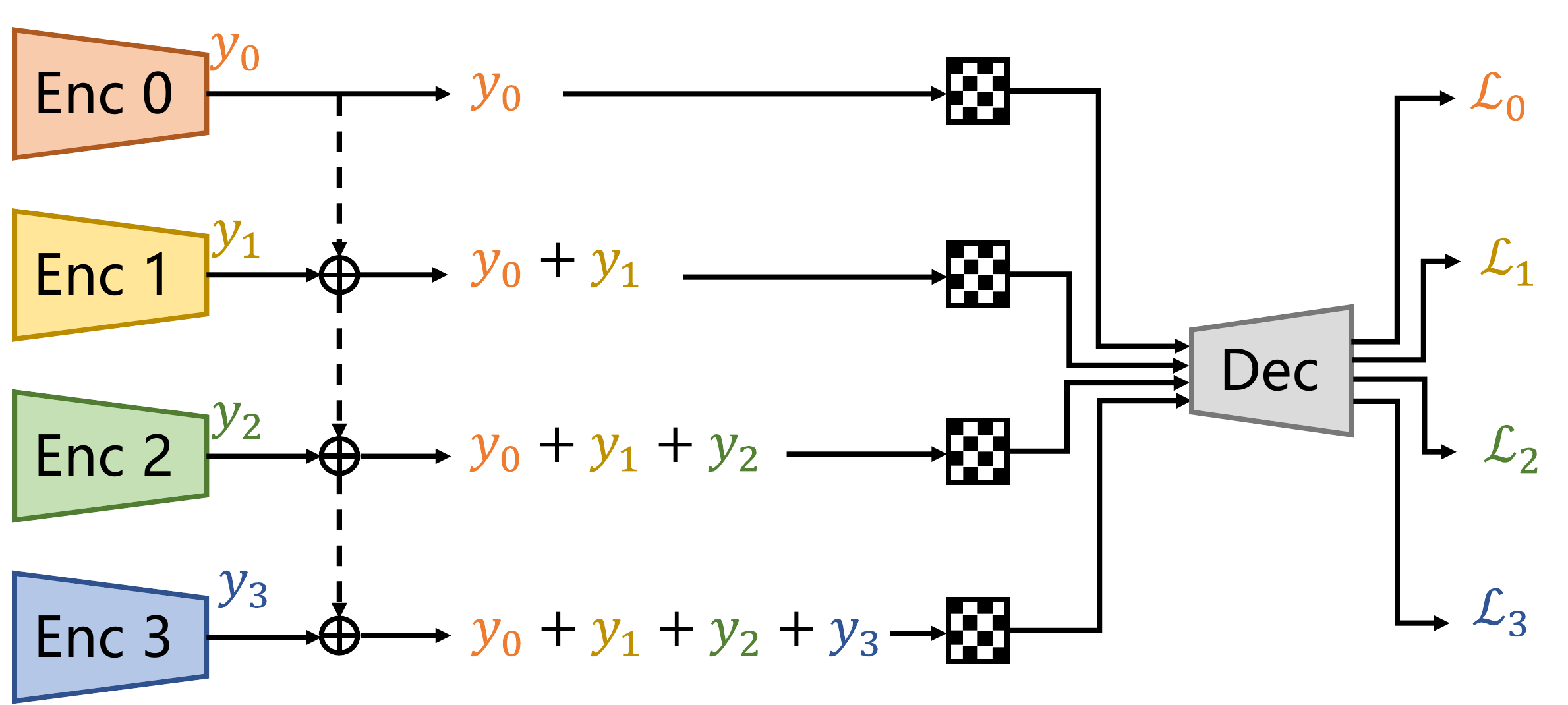}
   \end{center}
   \caption{A naive baseline for the scalable encoder trained in an end-to-end manner. Dash lines means detaching gradients. }
   \label{fig:nas}
\end{figure}

\textbf{End-to-end.} CBANet~\citep{CBANet} proposed a multi-branch structure for a scalable network. The computation complexity is dynamic by using different numbers of branches in testing. This method can be considered as the baseline in our paper. That is directly training our framework end-to-end (cf. Fig.~\ref{fig:nas}). During training, all encoders generate bitstreams for decoding, and the corresponding loss is calculated. However, this baseline requires a large GPU memory for training and performs badly demonstrated by our experiments. End-to-end training requires optimizing all parameters of the whole network. And it does not utilize the knowledge in the pretrained cumbersome model. Hence, the training is inefficient.

\begin{figure}
   \begin{center}
   \includegraphics[width=0.9\linewidth]{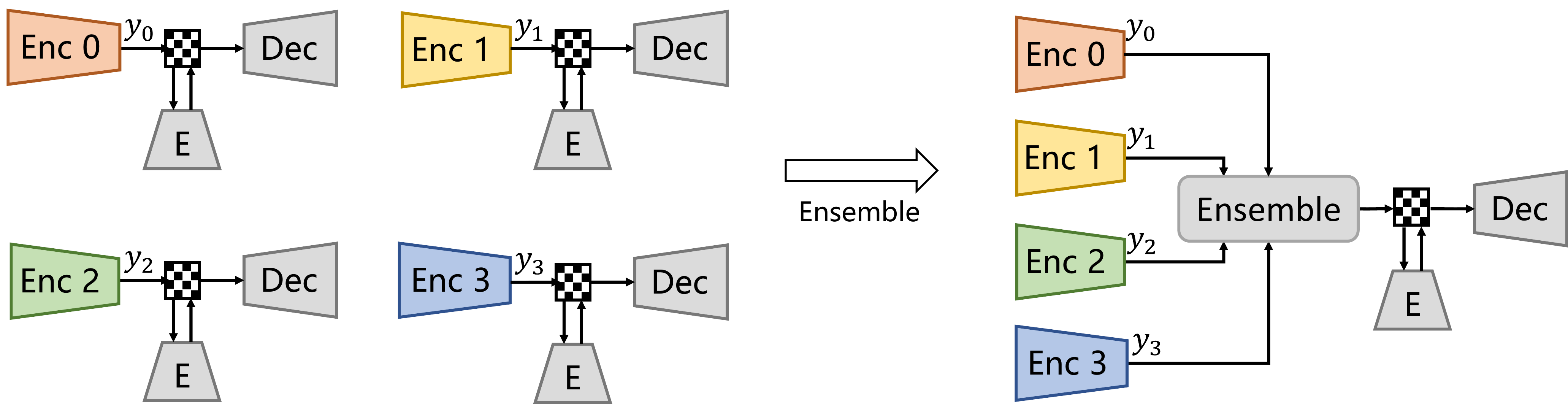}
   \end{center}
   \caption{Illustration of training separate small encoders to achieve a scalable encoder. }
   \label{fig:scalable_sep}
\end{figure}

\textbf{Separate.} Another baseline method is directly training separate small encoders with a uniform decoder. As shown in Fig.~\ref{fig:scalable_sep}, four small encoders are trained with the frozen decoder and entropy model. For testing, different numbers of encoders can be chosen for the specific encoding complexity. In the right part of Fig.~\ref{fig:scalable_sep}, all four encoders are chosen. The ensemble module will receive all bitstreams from encoders and choose the best bitstream to send. Note that the raw image is available in encoding. The ensemble module is able to reconstruct each bitstream and compute the RD score according to the raw image. However, these separate encoders suffer from the homogenization problem. With the uniform decoder, these encoders generate similar bitstreams. In Fig.~\ref{fig:scalable}, we find that adding more encoders improves the RD performance marginally. 

\begin{figure}
   \begin{center}
   \includegraphics[width=0.9\linewidth]{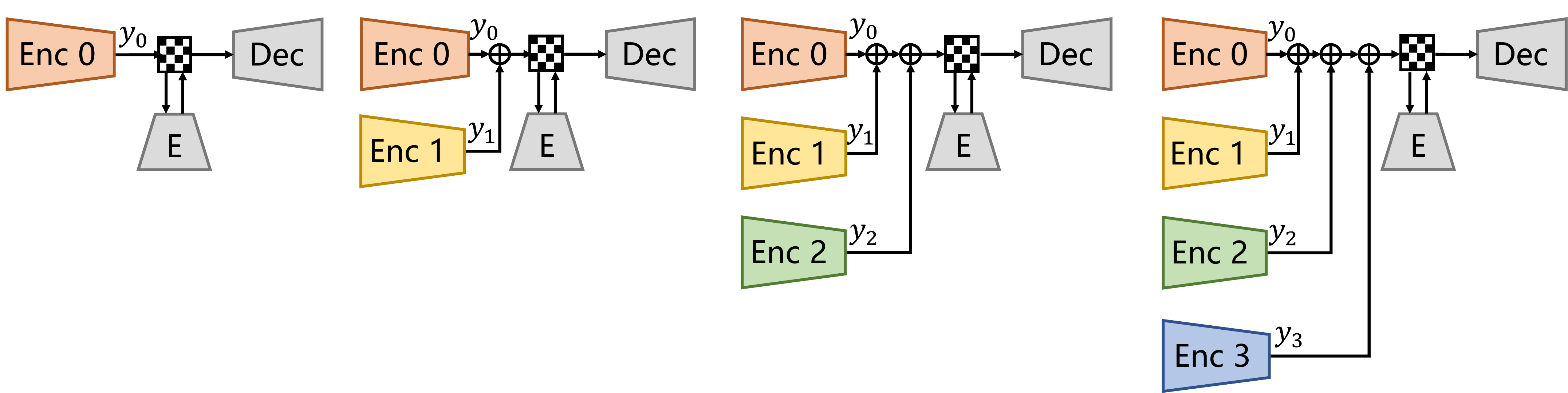}
   \end{center}
   \caption{Illustration of training small encoders one-by-one to achieve a scalable encoder. In each time, the new small encoder is encouraged to learn residual representations. }
   \label{fig:scalable_onebyone}
\end{figure}

\textbf{One-by-one (RRL).} We propose residual representation learning (RRL) to mitigate the homogenization problem. As illustrated in Fig.~\ref{fig:scalable_onebyone}, we train small encoders progressively. In each step, a new small encoder is added and trained from scratch. Note that other parts are frozen. The new encoder is encouraged to learn the residual representations which effectively alleviates the homogenization problem and enlarges the diversity of these small encoders. Fig.~\ref{fig:scalable} shows ``one-by-one'' outperforms ``separate encoders'' by a significant margin.

\textbf{Ours (RRL + mask decay).} At last, mask decay further improves the RD performance. As illustrated in Fig.~\ref{fig:resp}, mask decay directly utilizes knowledge of the cumbersome model. In each step, the small encoder is obtained by pruning from the cumbersome teacher. Here the teacher means the large encoder in Tab.~\ref{tab:scheme-1080p}, whereas the student means the small encoder. In Fig.~\ref{fig:resp}, the encoder with dash boundaries denotes the teacher encoder equipped with masks. During training, these masks will become sparse so that the network is able to transform into the student structure (i.e., the student encoder). Both RRL and mask decay treat the cumbersome teacher as a reference, which makes the training more effective. Fig.~\ref{fig:scalable} shows ``ours'' (RRL + mask decay) achieves superior RD performance and significantly narrows the performance gap from the teacher.

Tab.~\ref{tab:SEnc_DecL:kodak} summarizes results for our scalable encoders with the large decoder. We also conduct experiments for the small decoder, and Tab.~\ref{tab:SEnc_DecS:kodak} summarizes results. In these tables, $S_{0,\dots,i}$ denotes using $i+1$ small encoders, while L means the raw large encoder.

\begin{table}[t]
   \caption{BD-Rate (\%) comparison for PSNR on Kodak. The encoder is scalable and the decoder is Large. MACs in the table are for encoding with $1920\times 1088$ resolution, while MACs for other modules is 615.47 G in total. The anchor is VTM. }
   \label{tab:SEnc_DecL:kodak}
   \begin{center}
   \small
   \begin{tabular}{c|c|c|c|c|c}
   \toprule
    & {$S_0$} & {$S_{0,1}$} & {$S_{0,1,2}$} & {$S_{0,1,2,3}$} & {L} \\ 
   \midrule
   MACs (G) & 73.17 & 146.34 & 219.51 & 292.68 & 549.92 \\
   \midrule
   End-to-end & 3.1 & 2.2 & 1.8 & 1.6 & \multirow{4}*{-1.1} \\
   Separate & 2.8 & 1.9 & 1.7 & 1.5 & \\
   One-by-one & 2.8 & 1.4 & 0.9 & 0.6 & \\
   Ours & {\bf 1.4} & {\bf 0.4} & {\bf -0.1} & {\bf -0.2} & \\
   \bottomrule
   \end{tabular}
   \end{center}
\end{table}

\begin{table}[t]
   \caption{BD-Rate (\%) comparison for PSNR on Kodak. The encoder is scalable and the decoder is Small. MACs in the table are for encoding with $1920\times 1088$ resolution, while MACs for other modules is 150.46 G in total. The anchor is VTM. }
   \label{tab:SEnc_DecS:kodak}
   \begin{center}
   \small
   \begin{tabular}{c|c|c|c|c|c}
   \toprule
    & {$S_0$} & {$S_{0,1}$} & {$S_{0,1,2}$} & {$S_{0,1,2,3}$} & {L} \\ 
   \midrule
   MACs (G) & 73.17 & 146.34 & 219.51 & 292.68 & 549.92 \\
   \midrule
   End-to-end & 9.1 & 8.6 & 8.2 & 7.9 & \multirow{3}*{4.0} \\
   One-by-one & 7.3 & 6.3 & 5.8 & 5.5 & \\
   Ours & {\bf 6.3} & {\bf 5.6} & {\bf 5.0} & {\bf 4.8} & \\
   \bottomrule
   \end{tabular}
   \end{center}
\end{table}

\section{Details for experiments}
\label{app:sec:exp}

\textbf{Training.} The training dataset is the training part of Vimeo-90K~\citep{Vimeo} septuplet dataset. During training, the raw image is randomly cropped into $256\times 256$ patches. For a fair comparison, all models are trained with 200 epochs. For our method, it cost 60 epochs for the teacher to transform into the student with mask decay, then the student is finetuned by 140 epochs. AdamW is used as the optimizer with batch size 16. The initial learning rate is $2e-4$, and decays by 0.5 at 50, 90, 130, 170 epochs. The default decay rate $\eta$ for our mask decay is $4e-5$. All models are trained on a computer with 8 V100 GPUs. Our large and small model cost about 33 and 16 hours for training, respectively.

\textbf{Testing.} BD-Rate~\citep{BDRate} for peak signal-to-noise ratio (PSNR) is our main metric to compare different models. We use VTM 17.2 as the anchor, while negative numbers indicate bitrate saving and positive numbers indicate bitrate increase. Testing datasets contain Kodak~\citep{Kodak}, HEVC test sequences~\citep{HEVC}, Tecnick~\citep{Tecnick}. Kodak contains 24 images with the $768\times 512$ resolution. And Tecnick consists of 100 images with the $1200\times 1200$ resolution. HEVC is a video dataset that consists of 4 classes: B (1080P), C (480P), D (240P), E (720P). For testing image codecs on HEVC, only the first frame of each video sequence is used. We report BD-Rate on each HEVC class and average all for convenient comparison. For one testing image, we pad zeros on the boundary to make the resolution as multiples of 64. We test the latency on two computers. One is equipped with one 2080Ti GPU and two Intel(R) Xeon(R) E5-2630 v3 CPUs. Another is equipped with one A100 GPU and one AMD Epyc 7V13 CPU. Both encoding and decoding times are gathered. Note that the latency \emph{contains arithmetic coding}, but ignores the disk I/O time.

\section{Implement details for other image codecs}

\textbf{VTM-17.2}\footnote{\url{https://vcgit.hhi.fraunhofer.de/jvet/VVCSoftware_VTM/-/releases/VTM-17.2}} is the reference software for the state-of-the-art traditional codec (H.266/VVC), and it is released on Jul 17, 2022. We use CompressAI-1.1.0\footnote{\url{https://github.com/InterDigitalInc/CompressAI/releases/tag/v1.1.0}} to gather evaluation results. We install these two softwares according to the official instructions. The default configuration of VTM is used. All results are gathered by running the following command:
\begin{lstlisting}
python -m compressai.utils.bench vtm [path to image folder] -c [path to VTM folder]/cfg/encoder_intra_vtm.cfg -b [path to VTM folder]/bin -j 8 -q 24 26 28 30 32 34 36 38 40 42
\end{lstlisting}

\textbf{Entroformer}\footnote{\url{https://github.com/damo-cv/entroformer}} is proposed in \citet{entroformer}. Two series models are released: parallel, non-parallel. We evaluate the RD performance of the parallel model on Tecnick and Kodak, respectively. But when we measure its latency, we find it requires a large GPU memory for generating bitstreams.

\textbf{STF}\footnote{\url{https://github.com/Googolxx/STF}} is proposed in \citet{STF}. The authors release two architectures: CNN-based model and Transformer-based model. But, only two pretrained models for two qualities are available. We evaluate one model on Tecnick and cite numbers in their paper for Kodak.

\textbf{SwinT-ChARM} is proposed in \citet{SwinT}. Because the authors did not release their code, we cite numbers in their paper and the webpage\footnote{\url{https://openreview.net/forum?id=IDwN6xjHnK8}}. The latency is reported in their paper, and Tab.~\ref{tab:latency-swint} compares ours with theirs. Our models achieve lower latency.

\textbf{ELIC} is proposed in \citet{ELIC}. Uneven channel-conditional adaptive coding is proposed in this work. \citet{ELIC} designed two models ELIC and ELIC-sm, where ELIC-sm is adapted from ELIC with a slim architecture. Because the authors did not release their code, we make comparisons with numbers in their paper.

\textbf{Cheng2020} is proposed in \citet{cheng2020learned}. Attention modules and Gaussian Mixture Model are proposed to improve the RD performance. Due to its high impact, it is always considered as the baseline method for neural image codecs. However, the autoregressive module is adopted in their raw model. And inference on GPU is not recommended for the autoregressive models. When testing on CPU, it takes more than one second to process one image. \citet{checkerboard} proposed the checkerboard module to mitigate this problem. We compare ours with this model in Tab.~\ref{tab:ELIC}.

\textbf{Xie2021} is proposed in \citet{InvCompress}. The invertible neural networks are proposed and used as the stronger backbone. But the same as Cheng2020, Xie2021 is also equipped with the autoregressive module. It suffers from high latency due to this module. We compare ours with this model in Tab.~\ref{tab:ELIC}.

\textbf{Minnen2018 and Minnen2020} are proposed in \citet{AR} and \citet{minnen2020channel}, respectively. The autoregressive context model was first introduced in \citet{AR} to boost the RD performance of neural image codecs. \citet{minnen2020channel} further improved it by considering channel-wise redundancy. Both two works have a significant impact in the neural image compression and are always considered as the baseline method for comparisons. We also compare ours with them in Tab.~\ref{tab:ELIC}.

\textbf{Ball{\'e}2018} is proposed in \citet{hyperprior}. The hyperprior structure was introduced, which is a milestone in the field of image codec. Most latter models adopt this structure, including ours. Without a context module, it achieves a high inference speed. But without using recent proposed techniques, it suffers from a poor RD performance compared with recent SOTA models. 

\begin{table}[t]
   \caption{Latency (ms) comparison with SwinT. Lower means better. To compare with SwinT, we test ours on 2080Ti with $768\times 512$ inputs, and directly cite their numbers. Note that SwinT does not reconstruct images for encoding, whereas ours reconstructs images.}
   \label{tab:latency-swint}
   \begin{center}
   \small
   \setlength{\tabcolsep}{3pt}
   \begin{tabular}{c|cccc|ccc}
   \toprule
   \multirow{2}*{Latency} & \multicolumn{4}{c|}{SwinT} & \multicolumn{3}{c}{EVC} \\ 
   & SwinT-ChARM & SwinT-Hyperprior & Conv-ChARM & Conv-Hyperprior & Large & Medium & Small \\
   \midrule
   encoding & 135.3 & 100.5 & 95.7 & 62.0 & 63.0 & 44.7 & 28.4 \\
   decoding & 167.3 & 114.0 & 264.0 & 219.3 & 41.1 & 32.4 & 24.4 \\
   \bottomrule
   \end{tabular}
   \end{center}
\end{table}

\section{More experimental results}
\label{sec:more-exp}

\begin{table}[t]
   \caption{BD-Rate (\%) and inference latency (ms) comparisons. BD-Rate is computed from PSNR-BPP curve of Kodak over the anchor VTM. Lower BD-Rate means better. Our models were tested in deterministic/non-deterministic modes, respectively. Other methods were tested by \citet{ELIC} with the non-deterministic mode. }
   \begin{center}
   \small
   \setlength{\tabcolsep}{2pt}
   \begin{tabular}{c|lr@{.}lr|rrrrr}
   \toprule
   \multirow{2}*{GPU} & \multirow{2}*{Model} & \multicolumn{2}{c|}{\multirow{2}*{BD-Rate}}  &\multicolumn{6}{c}{Inference Latency} \\ 
   & & \multicolumn{2}{c|}{} & {Tot.} & {YEnc.} & {YDec.} & {ZEnc.} & {ZDec.} & {Param.} \\
   \midrule
   \multirow{6}*{TITAN Xp}
   & Ball{\'e}2018~\citep{hyperprior}  & 40&85 & \textcolor{lightgray}{30.3} & \textcolor{lightgray}{11.9} & \textcolor{lightgray}{16.1} & \textcolor{lightgray}{1.2} & \textcolor{lightgray}{1.1} & {-} \\
   & Minnen2018~\citep{checkerboard} & 20&00 & \textcolor{lightgray}{37.6} & \textcolor{lightgray}{11.6} & \textcolor{lightgray}{17.3} & \textcolor{lightgray}{1.5} & \textcolor{lightgray}{1.5} & \textcolor{lightgray}{5.7} \\
   & Cheng2020~\citep{checkerboard}  & 3&89 & \textcolor{lightgray}{90.8} & \textcolor{lightgray}{36.8} & \textcolor{lightgray}{44.2} & \textcolor{lightgray}{0.9} & \textcolor{lightgray}{1.2} & \textcolor{lightgray}{7.7} \\
   & Minnen2020~\citep{minnen2020channel} & 1&11 & \textcolor{lightgray}{77.7} & \textcolor{lightgray}{10.6} & \textcolor{lightgray}{15.4} & \textcolor{lightgray}{1.4} & \textcolor{lightgray}{1.5} & \textcolor{lightgray}{48.8} \\
   & Xie2021~\citep{InvCompress} & -0&54 & \textcolor{lightgray}{$> 10^3$} & \textcolor{lightgray}{58.0} & \textcolor{lightgray}{162.2} & \textcolor{lightgray}{1.3} & \textcolor{lightgray}{1.3} & \textcolor{lightgray}{$> 10^3$}  \\
   & ELIC-sm~\citep{ELIC} & -1&07 & \textcolor{lightgray}{41.5} & \textcolor{lightgray}{12.6} & \textcolor{lightgray}{17.4} & \textcolor{lightgray}{1.1} & \textcolor{lightgray}{1.4} & \textcolor{lightgray}{9.0} \\
   & ELIC~\citep{ELIC}    & -7&88 & \textcolor{lightgray}{82.4} & \textcolor{lightgray}{31.2} & \textcolor{lightgray}{38.3} & \textcolor{lightgray}{1.2} & \textcolor{lightgray}{1.6} & \textcolor{lightgray}{10.1} \\
   \midrule
   \multirow{6}*{2080Ti}
   & EVC-SS (deterministic)     & 6&9 & 22.1 & 4.2 & 7.4 & 1.4 & 5.4 & 3.7 \\
   & EVC-SS (non-deterministic) & 6&9 & \textcolor{lightgray}{17.9} & \textcolor{lightgray}{3.8} & \textcolor{lightgray}{5.7} & \textcolor{lightgray}{0.9} & \textcolor{lightgray}{3.7} & \textcolor{lightgray}{3.8} \\
   & EVC-MM (deterministic)     & 0&9 & 31.4 & 9.1 & 11.9 & 1.4 & 5.3 & 3.7 \\
   & EVC-MM (non-deterministic) & 0&9 & \textcolor{lightgray}{26.5} & \textcolor{lightgray}{8.3} & \textcolor{lightgray}{10.2} & \textcolor{lightgray}{0.9} & \textcolor{lightgray}{3.5} & \textcolor{lightgray}{3.6} \\
   & EVC-LL (deterministic)     & -1&1 & 44.5 & 15.5 & 18.6 & 1.4 & 5.3 & 3.7 \\
   & EVC-LL (non-deterministic) & -1&1 & \textcolor{lightgray}{38.4} & \textcolor{lightgray}{14.1} & \textcolor{lightgray}{16.2} & \textcolor{lightgray}{0.9} & \textcolor{lightgray}{3.6} & \textcolor{lightgray}{3.6} \\
   \bottomrule
   \end{tabular}
   \end{center}
   \label{tab:ELIC}
\end{table}

\textbf{Comparison to more methods.} \citet{ELIC} proposed two models for efficient learned image compression, namely ELIC and ELIC-sm, respectively. However, they did not release their codes and models and it is difficult for us to make a fair comparison. We follow their testing manner to compare numbers in their paper. First, in their supplementary material, they mentioned the deterministic inference mode was not enabled when testing the model speeds. We test latencies in deterministic/non-deterministic modes, respectively. We find that non-determinism reduces the latency by average 16\% but easily results in catastrophic failures for decoding~\citep{balle2018integer}. Second, different from testing end-to-end latency with arithmetic coding in our main paper, here we test each module's latency as that in \citet{ELIC}. Hence, without considering arithmetic coding, the latency is lower than that in our main paper. Third, unfortunately, we do not have TITAN Xp\footnote{\url{https://www.nvidia.com/en-us/titan/titan-xp/}} GPUs, and we test our models on 2080Ti\footnote{\url{https://www.nvidia.com/en-us/geforce/graphics-cards/compare/?section=compare-20}}. 

Tab.~\ref{tab:ELIC} summarizes results. For our models, ``S'', ``M'', and ``L'' denote small, medium, large, respectively. We use two letters to represent the encoder and the decoder, respectively. Compared with previous SOTA methods (Ball{\'e}2018, Minnen2018, Cheng2020, Minnen2020, Xie2021), our large model (``EVC-LL'') achieves better RD performance and inference latency. And it is also comparable to ELIC-sm. Note that ours enjoys a single model to handle all RD trade-offs, while ELIC trained different models for each trade-off.

\textbf{Speed (megapixel/s) comparison.} In Tab.~\ref{tab:speed:megapixel}, we also compute models' speeds as megapixel/s according to Tab.~\ref{tab:speed}. Our models outperform previous models for this metric, too.

\begin{table}[t]
   \caption{Speed (megapixel/s) comparison. Higher means better. Entroformer~\citep{entroformer} and STF~\citep{STF} are the SOTA methods. `OM' means out of memory. Note that 2080Ti and A100 have 12GB and 80GB memory, respectively.}
   \label{tab:speed:megapixel}
   \begin{center}
   \small
   \setlength{\tabcolsep}{4pt}
   \begin{tabular}{c|c|c|c|cc|ccc}
   \toprule
   \multirow{2}*{Resolution} & \multirow{2}*{GPU} & \multirow{2}*{Type} & \multirow{2}*{Entroformer} & \multicolumn{2}{c|}{STF} & \multicolumn{3}{c}{EVC} \\ 
   & & & & Transformer & CNN & Large & Medium & Small \\
   \midrule
   \multirow{4}*{$768\times 512$} & \multirow{2}*{2080Ti} 
   & encoding & OM & 2.23 & 2.48 & 6.24 & 8.80 & 13.85 \\
   & & decoding & OM & 1.94 & 1.87 & 9.57 & 12.14 & 16.12 \\
   \cmidrule(r){2-9}
   & \multirow{2}*{A100} 
   & encoding & 0.48 & 3.39 & 4.08 & 18.64 & 19.86 & 22.22 \\
   & & decoding & 0.09 & 2.75 & 3.33 & 20.59 & 23.00 & 25.21 \\
   \midrule
   \multirow{4}*{$1920\times 1080$} & \multirow{2}*{2080Ti} 
   & encoding & OM & 3.60 & 4.55 & 6.79 & 11.42 & 22.81 \\
   & & decoding & OM & 3.90 & 3.18 & 11.57 & 17.56 & 28.33 \\
   \cmidrule(r){2-9}
   & \multirow{2}*{A100} 
   & encoding & 0.27 & 5.83 & 7.46 & 24.63 & 36.83 & 66.04 \\
   & & decoding & OM & 5.84 & 7.36 & 34.45 & 44.59 & 69.82 \\
   \bottomrule
   \end{tabular}
   \end{center}
\end{table}

\textbf{Comparison for our different models.} Fig.~\ref{fig:kodak-LMS} presents RD curves on Kodak. Our mask decay ``MD'' improves our baseline models significantly. In these figures, ``L'', ``M'', and ``S'' denote Large, Medium, and Small, respectively. And ``EVC-SL'' means the encoder is small and the decoder is large. 

Tab.~\ref{tab:md:CLIC_Tecnick} presents these model's results on CLIC2021\footnote{\url{https://storage.googleapis.com/clic2021_public/professional_test_2021.zip}} and Tecnick. Mask decay outperforms the baselines by significant margins.

\begin{figure}
\centering
\subcaptionbox{\label{fig:kodak-enc}}{\includegraphics[width=0.32\linewidth]{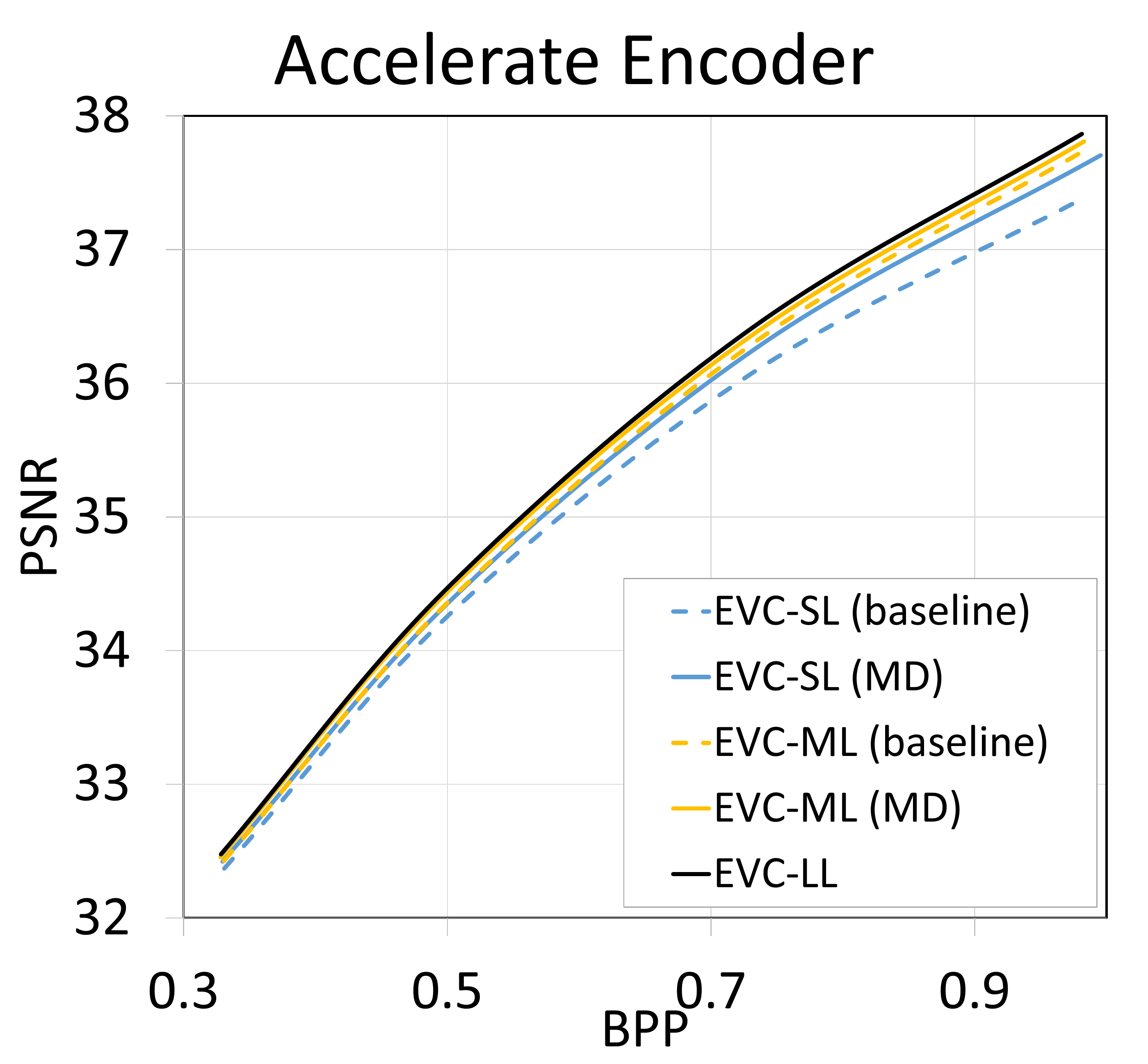}}
\,
\subcaptionbox{\label{fig:kodak-dec}}{\includegraphics[width=0.32\linewidth]{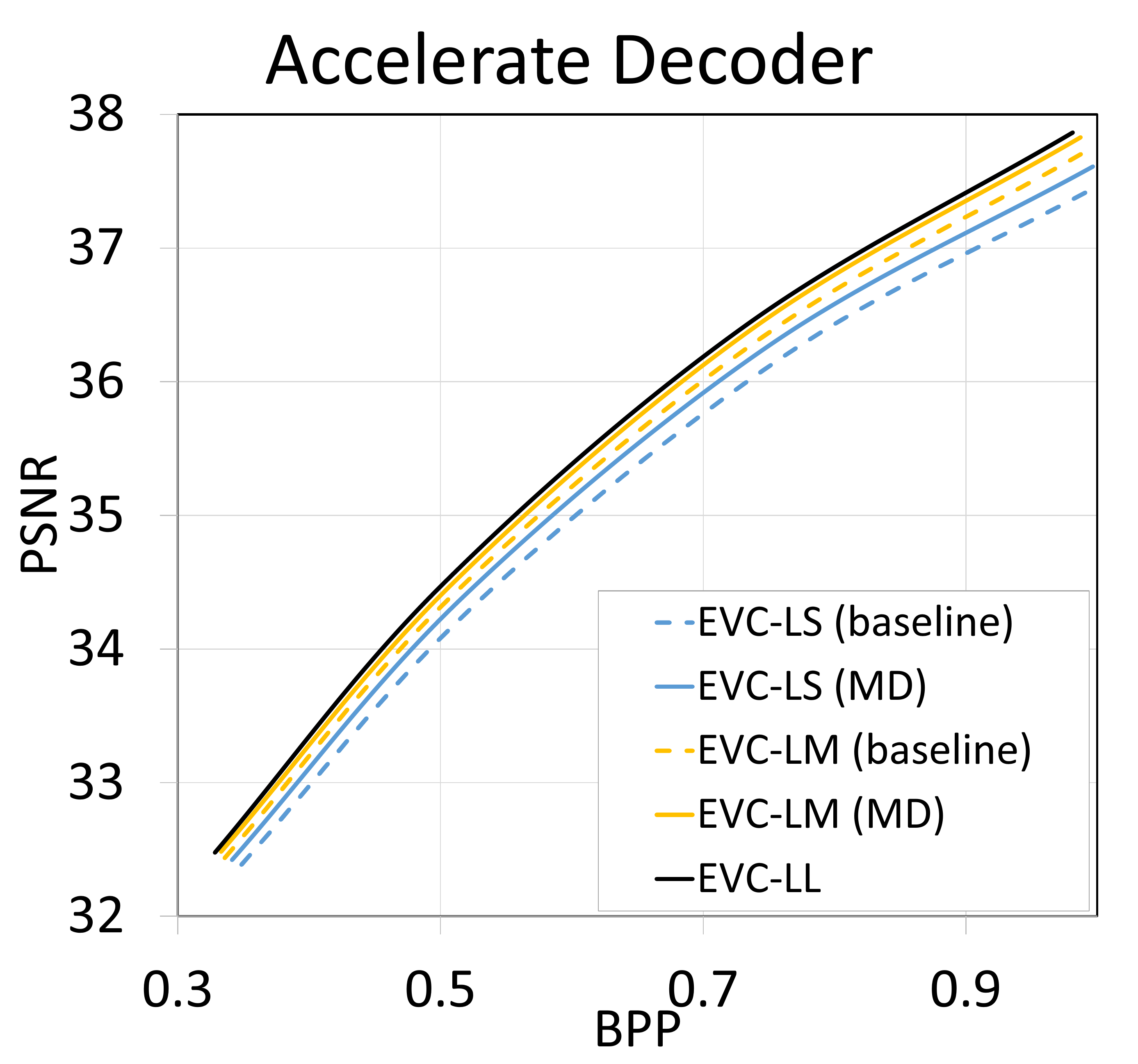}}
\,
\subcaptionbox{\label{fig:kodak-enc-dec}}{\includegraphics[width=0.32\linewidth]{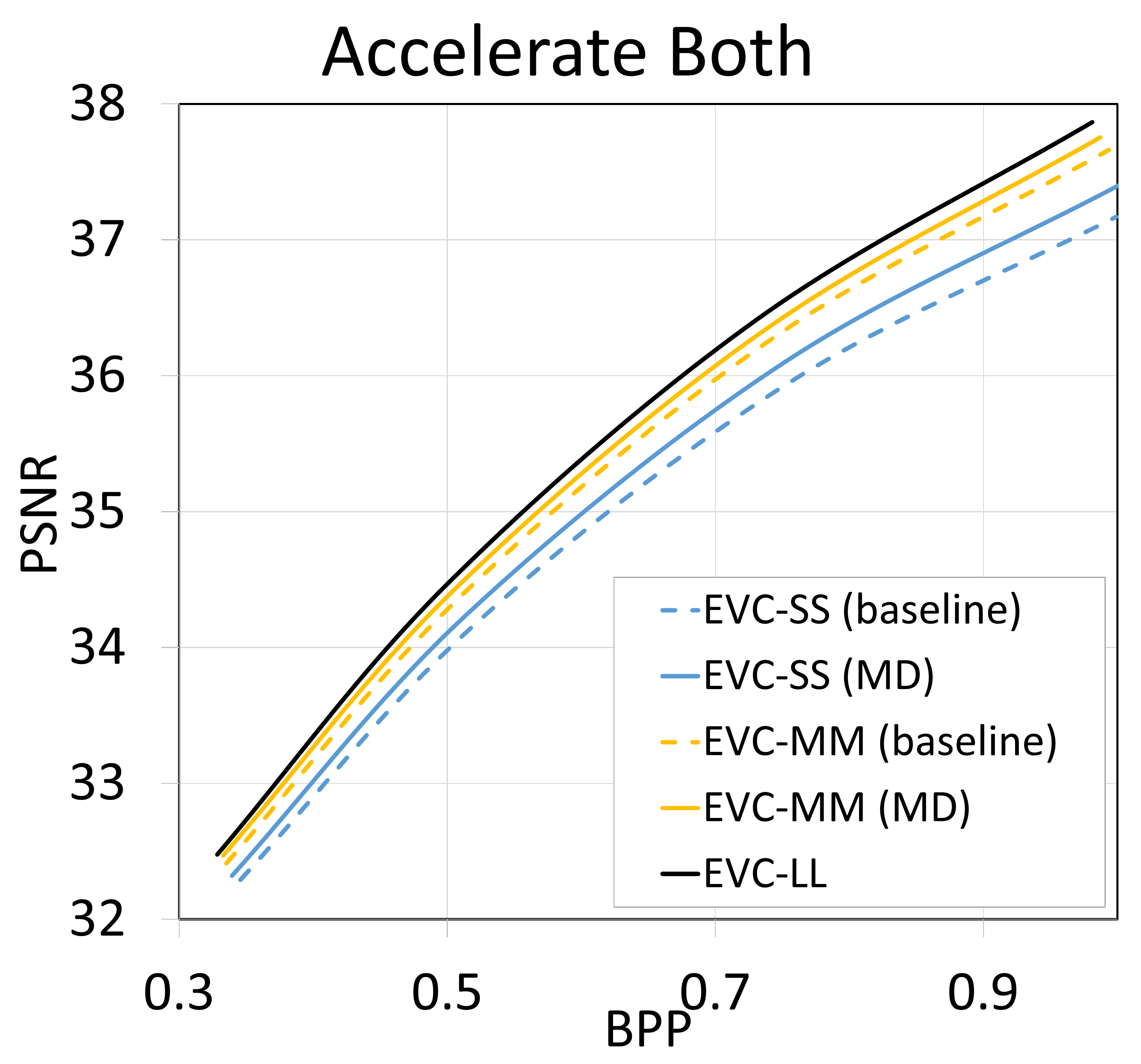}}
\caption{RD curves on Kodak for our models with different complexities. ``MD'' denotes mask decay. (\protect\subref{fig:kodak-enc}) and (\protect\subref{fig:kodak-dec}) show curves for accelerating only encoder or decoder, respectively. (\protect\subref{fig:kodak-enc-dec}) presents curves for accelerating both encoder and decoder. }
\label{fig:kodak-LMS} 
\end{figure}

\begin{table}
   \caption{BD-Rate (\%) comparison for PSNR on CLIC2021 and Tecnick. L, M, and S denote Large, Medium, and Small, respectively. The anchor is VTM. Lower BD-Rate means better. MACs is for $1920\times 1088$ inputs. }
   \centering
   \small
   \begin{tabular}{ccc|c|r@{.}l|r@{.}l}
   \toprule
   {Enc} & {Dec} & {MACs (G)} & {Method} & \multicolumn{2}{c}{CLIC2021} & \multicolumn{2}{|c}{Tecnick} \\ 
   \midrule
   L & L & 1165.39 & Baseline & -0&7 & -2&0 \\
   \midrule
   \midrule
   \multirow{2}*{M} & \multirow{2}*{L} & \multirow{2}*{878.37} 
         & Baseline & 1&0 & -0&0 \\
      &  &  & Mask Decay & {\bf 0}&{\bf 6 (\textcolor{red}{24\%$\uparrow$})} & {\bf -0}&{\bf 3 (\textcolor{red}{15\%$\uparrow$})} \\
   \midrule
   \multirow{2}*{S} & \multirow{2}*{L} & \multirow{2}*{688.64} 
   & Baseline & 4&3 & 1&9 \\
   & &
   & Mask Decay & {\bf 1}&{\bf 9 (\textcolor{red}{48\%$\uparrow$})} & {\bf 0}&{\bf 2 (\textcolor{red}{44\%$\uparrow$})}\\
   \midrule
   \midrule
   \multirow{2}*{L} & \multirow{2}*{M} & \multirow{2}*{885.12} 
   & Baseline & 2&4 & 1&0 \\
      &  &  & 
   Mask Decay & {\bf 0}&{\bf 9 (\textcolor{red}{48\%$\uparrow$})} & {\bf -0}&{\bf 3 (\textcolor{red}{43\%$\uparrow$})} \\
   \midrule
   \multirow{2}*{L} & \multirow{2}*{S} & \multirow{2}*{700.38} 
   & Baseline & 7&1 & 5&4 \\
      &  &  & 
   Mask Decay & {\bf 4}&{\bf 3 (\textcolor{red}{36\%$\uparrow$})} & {\bf 2}&{\bf 7 (\textcolor{red}{36\%$\uparrow$})} \\
   \midrule
   \midrule
   \multirow{2}*{M} & \multirow{2}*{M} & \multirow{2}*{598.10} 
   & Baseline & 4&6 & 2&6 \\
      &  &  & 
   Mask Decay & {\bf 1}&{\bf 6 (\textcolor{red}{57\%$\uparrow$})} & {\bf 0}&{\bf 7 (\textcolor{red}{41\%$\uparrow$})} \\
   \midrule
   \multirow{2}*{S} & \multirow{2}*{S} & \multirow{2}*{223.63} 
   & Baseline & 11&6 & 9&9 \\
      &  &  & 
   Mask Decay & {\bf 8}&{\bf 5 (\textcolor{red}{25\%$\uparrow$})} & {\bf 7}&{\bf 6 (\textcolor{red}{19\%$\uparrow$})} \\
   \bottomrule
   \end{tabular}
   \label{tab:md:CLIC_Tecnick}
\end{table}

\textbf{Sparsity losses and decay rates.} For easier to read, we replicate Fig.~\ref{fig:Loss} for larger sizes. Figs.~\ref{fig:Loss:Lours} and \ref{fig:Loss:L12} are these replications.

\begin{figure}
   \centering
   \includegraphics[width=0.5\linewidth]{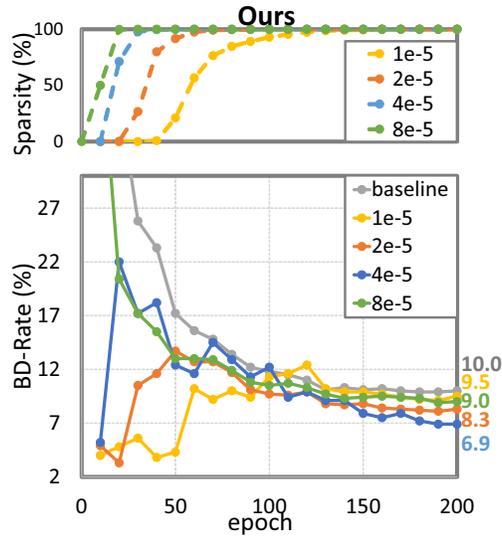}
\caption{Mask decay with our sparsity loss (Eq.~\ref{eq:loss}). Different decay rates $\eta$ were tried. This figure is the replication of Fig.~\ref{fig:Lours}. }
\label{fig:Loss:Lours} 
\end{figure}

\begin{figure}
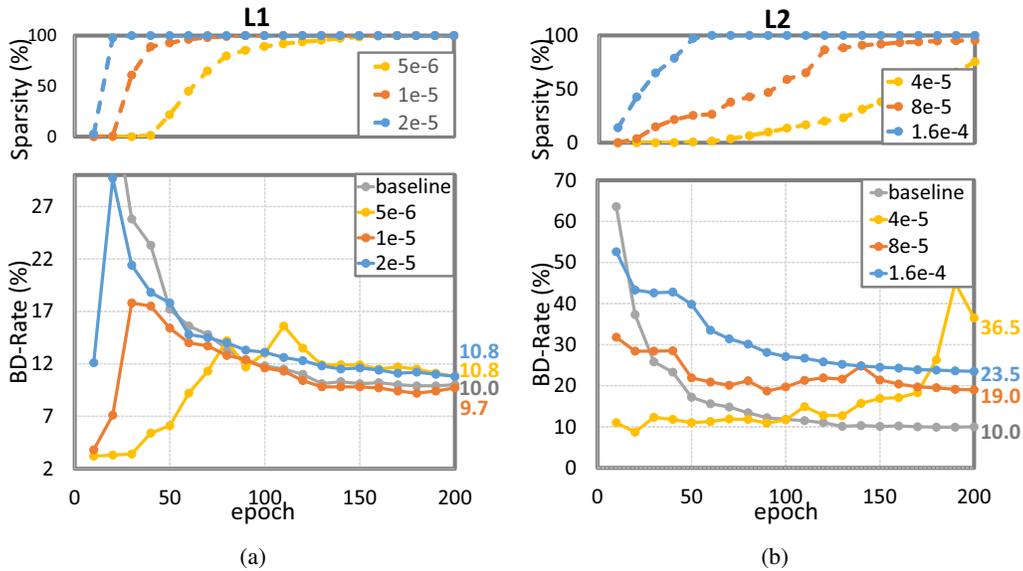

   \centering
	\subcaptionbox{\label{fig:L1}}{\includegraphics[width=0.48\linewidth]{L1.pdf}}
   \,
   \subcaptionbox{\label{fig:L2}}{\includegraphics[width=0.48\linewidth]{L2.pdf}}
\caption{Mask decay with L1 and L2 losses. We adjusted the decay rate $\eta$ for each criterion. These two figures are replications of Fig.~\ref{fig:L1} and \ref{fig:L2}. }
\label{fig:Loss:L12} 
\end{figure}

\textbf{Disscusions for important channels.} Tab.~\ref{tab:orth} presents the specific numbers in Fig.~\ref{fig:amd_stat}. There are four masks in the first two blocks. We analyze the first mask (ResB.1.mask1) for example. Mask decay is required to choose 64 channels from the teacher's 192 channels. With four different decay rates, only 83 channels are considered and they choose 44 channels together. It seems like these channels are important for superior performance. We treat these 83 channels as $\sC_F$ and let our algorithm avoid choosing these channels. A1B (avoid specific channels in the first block) only selects 2 channels in $\sC_F$, while A2B selects 3. First, it demonstrates our AMD indeed avoids specific channels. Next, we compare the RD performance of these methods. Surprisingly, AMD results in comparable performance. That indicates that there are no important channels for superior performance.

\begin{table}
\vspace{-2mm}
   \caption{The relationship for channels chosen by different settings. ``ResB.1'' and ``DCB.2'' means the Residual block and the Depth Conv block, respectively. Note that they are the first two blocks in our model. $\sC$ denotes the set of chosen channels, and $F=\{1e-5, 2e-5, 4e-5, 8e-5\}$ which indicates different decay rates in Figure~\ref{fig:Lours}. A1B and A2B means avoiding specific channels in the first 1 and 2 blocks, respectively. We report the cardinality of each set. }
   \label{tab:orth}
   \vspace{-2mm}
   \begin{center}
   \small
   \setlength{\tabcolsep}{2pt}
   \begin{tabular}{c|cc|cc|cc|cc}
   \toprule
   Mask Layer & Teacher & Student & $\cap_{i\in F}\sC_i$ & $\cup_{i\in F}\sC_i$ & $\sC_{A1B}\cap\sC_F$ & $\sC_{A1B}\cup\sC_F$ & $\sC_{A2B}\cap\sC_F$ & $\sC_{A2B}\cup\sC_F$ \\
   \midrule
   ResB.1.mask1 & 192 & 64 & 44 & 83 & 2 & 145 & 3 & 144 \\
   ResB.1.mask2 & 192 & 64 & 47 & 83 & 0 & 147 & 2 & 145 \\
   DCB.2.mask1  & 192 & 64 & 44 & 86 & 56 & 94 & 7 & 143 \\
   DCB.2.mask3  & 768 & 256 & 179 & 328 & 191 & 393 & 12 & 572 \\
   \bottomrule
   \end{tabular}
   \end{center}
\vspace{-2mm}
\end{table}

\section{Visualizations}

Fig.~\ref{fig:visual} visualizes our models' reconstructions. Comparing our small model (EVC-SS) and large model (EVC-LL), the reconstructions are almost the same. That demonstrates our mask decay does not result in extra artifacts. Compared with VTM, ours enjoy fast encoding and decoding.

\begin{figure}[h]
   \begin{center}
   \includegraphics[width=1\linewidth]{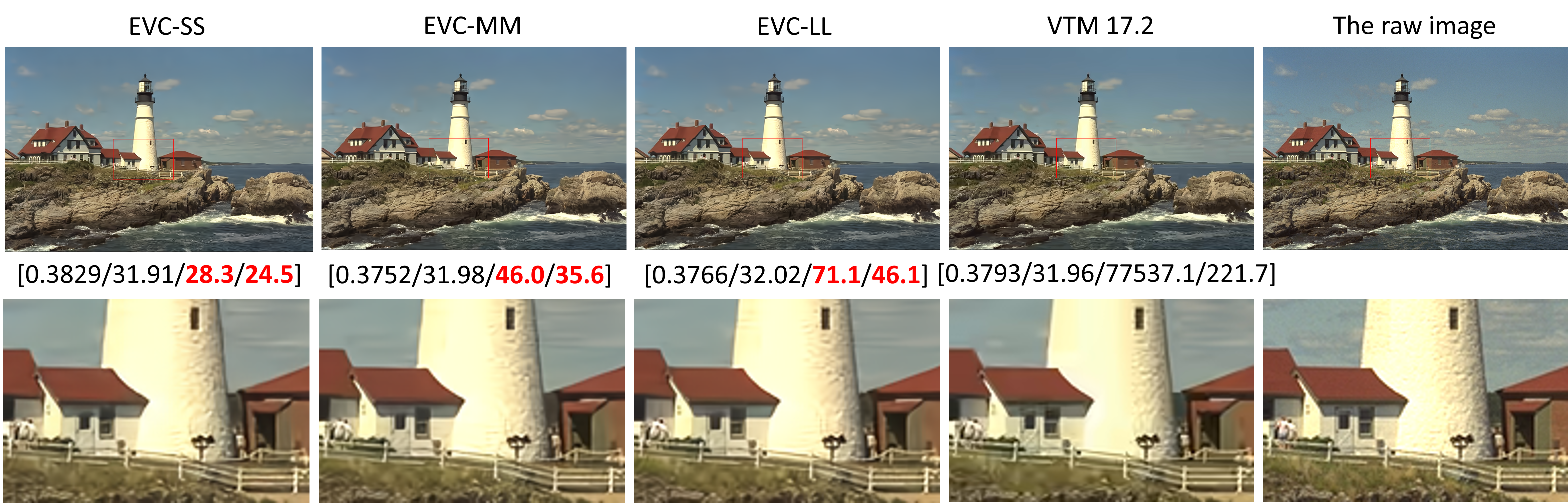}
   \end{center}
   \caption{Visualization of our models' reconstruction. EVC-SS denotes our model equipped with the small encoder and the small decoder, while M and L means medium and large, respectively. Numbers in the tuple are BPP, PSNR, the encoding time (ms), and the decoding time (ms), respectively. Note that the latency is measured on a computer with 2080Ti GPU. Our models are dramatically faster than VTM. }
   \label{fig:visual}
\end{figure}

\end{document}